\journal{}
\def \newtext{}
\DeclareMathOperator{\Tr}{tr}
\newcommand{\defeq}{\doteq}
\newcommand{\frob}{_\mathrm{F}}
\newcommand{\norm}[1]{\left\lVert#1\right\rVert}
\def \CE{\mathcal{E}}
\newcommand{\tr}{^\mathrm{T}}  %
\newcommand{\R}{\mathbb{R}}
\newtheorem{theorem}{Theorem}
\newtheorem{lemma}{Lemma}
\newtheorem{definition}{Definition}
\newtheorem*{assumption*}{\assumptionnumber}
\providecommand{\assumptionnumber}{}
\newenvironment{assumption}[1]
{%
	\renewcommand{\assumptionnumber}{A#1}%
	\begin{assumption*}%
		\protected@edef\@currentlabel{A#1}%
	}
	{%
	\end{assumption*}
}
\begin{document}

\begin{frontmatter}

\title{\LARGE\vspace*{-2mm} Non-Asymptotic State-Space Identification of Closed-Loop\\ Stochastic Linear Systems using Instrumental Variables}
\author[1]{Szabolcs Szentpéteri}
\author[1,2]{\hspace*{20mm}Balázs Csanád Csáji}
\address[1]{Institute for Computer Science and Control (SZTAKI), E\"otv\"os Lor\'and Research Network (ELKH), Budapest, Hungary\vspace{1mm}}
\address[2]{Department of Probability Theory and Statistics, Institute of Mathematics, E\"otv\"os Lor\'and University (ELTE), Budapest, Hungary\vspace*{-4mm}}

\begin{abstract}
The paper suggests a generalization of the Sign-Perturbed Sums (SPS) finite sample system identification method for the identification of closed-loop observable stochastic linear systems in state-space form. The solution builds on the theory of matrix-variate regression and instrumental variable methods to construct distribution\hyp free confidence regions for the state-space matrices. Both direct and indirect identification are studied, and the exactness as well as the strong consistency of the construction are proved. Furthermore, a new, computationally efficient ellipsoidal outer-approximation algorithm for the confidence regions is proposed. The new construction results in a semidefinite optimization problem which has an order-of-magnitude smaller number of constraints, as if one applied the ellipsoidal outer-approximation after vectorization. The effectiveness of the approach is also demonstrated empirically via a series of numerical experiments.

\end{abstract}

\begin{keyword}
closed-loop identification, distribution-free methods, non-asymptotic guarantees, instrumental variables
\end{keyword}

\end{frontmatter}

\section{Introduction}
\vspace*{-1mm}
Estimating a {\em mathematical model} from observations of a {\em dynamical system} is a fundamental problem across several fields, from system identification to signal processing and machine learning. Standard estimation techniques, 
such as the prediction error approach or the generalized method of moments, typically provide {\em point estimates} associated with {\em only asymptotically guaranteed} confidence sets \cite{Ljung1999}.

In practical applications there is often {\em limited statistical knowledge} about the noises and uncertainties affecting the system and a {\em limited number of measurements}. Furthermore, in many situations, data can only be gathered under {\em feedback control}. This is especially the case, for example, for economy and biology related applications. If our problem involves strong safety or stability requirements, having guaranteed confidence regions are strongly desirable. 
{\newtext Typical examples,} where guaranteed region estimates can be essential are robust and adaptive control \cite{Bertsekas2007, kumar2015stochastic, khalil1996robust}.

One of the first approaches to non-asymptotic system identification were \cite{weyer1999} and \cite{Campi2002}. In the past few years, due to a renewed interest in system identification methods with finite sample guarantees, non-asymptotic high probability bounds on the estimation error were investigated for linear state-space models, {\newtext see for example
\cite{faradonbeh2018, sarkar2019, jedra2019sample, oymak2019non, tsiamis2019finite, zheng2020non}.} The above results often use {\em strong statistical assumptions}, such as joint {\em Gaussianity}, and \cite{faradonbeh2018, sarkar2019, tsiamis2019finite} considered uncontrolled, while \cite{jedra2019sample} and \cite{zheng2020non} studied open-loop LTI systems. Non-asymptotic identification of a {\em closed-loop} LTI system with the REDAR algorithm was investigated in \cite{lee2020non}, still assuming {\em Gaussian} white noise type uncertainties. 

Typical examples of {\em distribution\hyp free} system identification algorithms with 
strong 
non-asymptotic guarantees are
the Leave\hyp out Sign\hyp dominant Correlation Regions (LSCR) \cite{Campi2005,Campi2009,Campi2010} and the Sign\hyp Perturbed Sums (SPS) \cite{Csaji2015, volpe2015sign, Csaji2015cdc} methods. 
SPS can construct {\em exact} confidence regions for (open-loop) {\em general linear systems} under mild statistical assumptions \cite{Csaji2012b}. In standard SPS, the confidence set is given by its {\em indicator} function, which can be evaluated at any parameter.
In \cite{kieffer2014guaranteed} a guaranteed characterization of SPS was developed using interval analysis{\newtext, while in \cite{Csaji2015}
an {\em ellipsoidal outer approximation} was given for FIR (finite impulse response) systems.}
The {\em strong consistency} and the asymptotic shape of SPS regions were studied in \cite{Weyer2017}.

Several generalizations of SPS were suggested, such as Data Perturbation (DP) methods \cite{kolumban2015perturbed} which can use other perturbations, not only sign changes; and UD-SPS that is able to detect undermodelling, in case of FIR systems \cite{care2021facing}.

The {\em closed-loop} applicability of SPS for 
general linear systems
was studied in \cite{Csaji2015cdc}, where the {\em exact coverage} of 
SPS regions
is shown, but they 
are only given by their indicators. 
An {\em instrumental variables} \cite{Soderstrom1989} based extension of SPS for linear regression %
was presented in \cite{volpe2015sign},
{\newtext making the SPS {\em outer ellipsoids} applicable to ARX (autoregressive with exogenous inputs) systems even under {\em feedback}.}

All of the SPS related papers above studied {\em scalar} models, where the system is SISO and it is given either with transfer functions or in a linear regression form. In many control problems the system is MIMO and it is given in a {\em state-space} form which even includes feedback control, for which the SPS variants above are not directly applicable. 

One could apply the scalar approach of \cite{volpe2015sign} to a state-space model via vectorization, i.e., after reformulating it as a (potentially huge) linear regression problem  \cite{giacomo2022state}.
This, however, needs additional assumptions on the noise vectors (i.e., that their distributions are symmetric w.r.t.\ each coordinate axis and even their components are independent). Further, computing the resulting ellipsoidal outer approximation leads to large semidefinite programming problems.

Instead of doing so, we extend SPS to state-space models by applying {\em matrix-variate regression} which leads to a more compact approach with relaxed assumptions on the noises (the components of the noise vectors can be dependent and only their joint distributions should be symmetric about zero, full axial symmetry is not needed). An alternative {\em semidefinite program} (SDP) is suggested, as well, to compute outer ellipsoids, which has an {\em order-of-magnitude} smaller number of constraints than the one based on \cite{volpe2015sign}.

The main {\em contributions} of the paper are as follows:
\begin{enumerate}
	\item A generalization of SPS is proposed to build distribution\hyp free confidence sets for {\em matrix-variate regression} problems using {\em instrumental variables} (MIV-SPS). It is demonstrated on {\em closed-loop state-space} models.\vspace{1mm}
	\item The {\em exact coverage probabilities} (for any finite sample size) and the {\em strong consistency} of the MIV-SPS confidence regions are proved under mild assumptions.\vspace{1mm}
	\item A new, efficient {\em ellipsoidal outer approximation} algorithm is introduced for MIV-SPS, based on SDP.\vspace{1mm}
	\item The effectiveness of MIV-SPS is also validated experimentally and it is compared to (scalar) IV-SPS and the confidence ellipsoids of the asymptotic theory.
\end{enumerate}
The paper is organized as follows. In Section \ref{problemsetting} the problem setting and its linear and matrix-variate regression formulations as well as our main assumptions are introduced. Section \ref{sec:lsandasym} gives a summary of the instrumental variable (IV) method and its asymptotic theory. In Section \ref{sec:spsindicator} the MIV-SPS algorithm is presented, while Section \ref{sec:theory} states the theoretical guarantees of the method. Section \ref{sec:eoa} introduces the ellipsoidal outer-approximation algorithm. The simulation experiments and comparisons are presented in Section \ref{sec:experiments}. Finally, Section \ref{conclusion} concludes the paper.

\vspace*{-1mm}
\section{Problem Setting}\label{problemsetting}
\vspace*{-1mm}
\label{sec:problemsetting}
This section introduces the closed-loop stochastic linear state-space model, presents the linear and matrix-variate regression reformulations and the main assumptions.

\vspace*{-1mm}
\subsection{Stochastic Linear State-Space Model}
Consider the observable linear state-space (LSS) model
\vspace{-1mm}
\begin{equation}\label{equ:lss_st}
	x_{k+1}\, =\, Ax_k\, +\, Bu_k\, +\, w_k,
\vspace{-1mm}
\end{equation}
for $k=0, 1, \dots, n-1$, where $x_k$ is a $d_x$-dimensional, $u_k$ is a $d_u$-dimensional, and $w_k$ is a $d_x$-dimensional real random vector; $A$, $B$ are {\em unknown} real matrices
to be estimated. The system can operate in {\em closed-loop} with feedback rule
\vspace{-1mm}
\begin{equation}\label{equ:lss_action}
	u_k\, =\, Fx_k \,+\, Gr_k,
\vspace{-1mm}	
\end{equation}
where $r_k$ is an $d_r$-dimensional real random vector, which can represent a reference signal, a setpoint or noise affecting the controller; while $F$ and $G$ are 
real matrices. 

In this work, we will concentrate on {\em region estimation}, therefore, we aim at constructing a {\em confidence region} which contains matrices $A$ and $B$ with a {\em user-chosen} probability. 

\vspace*{-1mm}
\subsection{Linear Regression Formulation}\label{sec:linregression}
\label{subsec:linregform}
The LSS dynamics \eqref{equ:lss_st} can be written as
\begin{equation}
	\begin{aligned}
		\begin{bmatrix}
			x_{k+1, 1}\\
			x_{k+1, 2}\\
			\vdots \\
			x_{k+1, d_x}
		\end{bmatrix}=&\begin{bmatrix}
			a\tr_1\\
			a\tr_2\\
			\vdots \\
			a\tr_{d_x}
		\end{bmatrix}\begin{bmatrix}
			x_{k, 1}\\
			x_{k, 2}\\
			\vdots \\
			x_{k, d_x}
		\end{bmatrix} + \\
		&\begin{bmatrix}
			b\tr_1\\
			b\tr_2\\
			\vdots \\
			b\tr_{d_u}
		\end{bmatrix}\begin{bmatrix}
			u_{k, 1}\\
			u_{k, 2}\\
			\vdots \\
			u_{k, d_u}
		\end{bmatrix}+\begin{bmatrix}
			w_{k, 1}\\
			w_{k, 2}\\
			\vdots \\
			w_{k, d_x}
		\end{bmatrix}\!\!,
	\end{aligned}
\end{equation}
where $\{a_i\tr\}$ and $\{b_i\tr\}$ are the rows of $A$ and $B$, respectively.
Then, the system dynamics from time $1$ until time $n$ can be written in a {\em linear regression} form as
\begin{equation}
	\label{sys:linreg}
	y \,=\, \Xi \,\theta^* +\, w,
\end{equation}
where $\theta^* \defeq (a_1\tr, \dots,a_{d_x}\tr, b_1\tr, \dots, b_{d_u}\tr)\tr\!\!\!,$\, and
\begin{equation}
	y \defeq \begin{bmatrix}
		x_{1,1} \\
		x_{1,2} \\
		\vdots \\
		x_{1,d_x} \\
		x_{2,1} \\		
		\vdots \\
		x_{n,d_x} \\
	\end{bmatrix}\!\!,\quad
	\Xi \defeq \begin{bmatrix}
		\xi_{0,1}\tr\\
		\xi_{0,2}\tr\\
		\vdots \\
		\xi_{0,d_x}\tr\\
		\xi_{1,1}\tr\\		
		\vdots \\
		\xi_{n-1,d_x}\tr\\
	\end{bmatrix}\!\!,\quad
	w \defeq \begin{bmatrix}
		w_{0,1}\\
		w_{0,2}\\
		\vdots \\
		w_{0,d_x}\\
		w_{1,1}\\	
		\vdots \\
		w_{n-1,d_x}\\
	\end{bmatrix}\!\!,
\end{equation}
and the regressors $\{\xi_{k, i}\}$ are defined as follows
\begin{equation}
	\label{eq:xik}
	\xi_{k, i}\tr \defeq (\overbrace{0_x\tr,\hdots,0_x\tr,\underbrace{x_k\tr}_{i \text{th}}, 0_x\tr}^{\text{dim:}\, d_x^2}, \overbrace{0_u\tr ,\hdots, 0_u\tr,\underbrace{u_k\tr}_{i \text{th}},0_u\tr}^{\text{dim:}\, d_xd_u}),
\end{equation}
where $0_x$ and $0_u$  are $d_x$ and $d_u$ dimensional zero vectors.

The advantage of the linear regression formulation of \eqref{sys:linreg} is that the previous results for scalar systems, e.g., the IV-SPS method \cite{Csaji2015cdc}, can be directly applied. Note that in this case each marginal distribution of the noise vector should satisfy the noise assumptions of the SPS, in order to allow the application of scalar SPS approaches.

\vspace*{-1mm}
\subsection{Matrix-Variate Regression Formulation}\label{sec:matrixregression}
In this paper, we argue that a more natural way to handle (even closed-loop) linear state-space models is to apply a matrix-variate regression formulation. As we will see, this allows relaxed assumption on the noises and leads to computationally more efficient constructions.

Without vectorization, state-space model \eqref{equ:lss_st} can be reformulated as a {\em matrix-variate regression} problem, i.e.,
\begin{equation}
	\label{sys-matvarreg}
	Y = \Phi\Theta^*\hspace{-0.5mm} + W,
\end{equation}
where, unlike in \eqref{sys:linreg}, the output, $Y$, the true parameter, $\Theta^*$, and the noise, $W$, are all matrices, as well. That is
\begin{equation*}
	Y \defeq \begin{bmatrix}
		x_1\tr \\
		x_2\tr \\
		\vdots \\
		x_n\tr
	\end{bmatrix}\!\!, \hspace{2mm}
	\Phi \defeq \begin{bmatrix}
		\varphi_0\tr \\
		\varphi_1\tr \\
		\vdots \\
		\varphi_{n-1}\tr
	\end{bmatrix}\!\!,\hspace{2mm}
	\Theta^* \defeq \begin{bmatrix}
		A\tr \\
		B\tr
	\end{bmatrix}\!\!,\hspace{2mm}
	W \defeq \begin{bmatrix}
		w_0\tr \\
		w_1\tr \\
		\vdots \\
		w_{n-1}\tr
	\end{bmatrix}\!\!,
\end{equation*}
here the regressors $\varphi_k$ are defined as $\varphi_k \defeq (x_k\tr, u_k\tr)\tr\!\!\!,$\, %
which are $d \defeq d_x + d_u$ dimensional vectors.

{\newtext We will also use the notations $Y_n$, $\Phi_n$, 
and $W_n$, in cases when the dependence on the sample size $n$ is crucial.}

\vspace*{-1mm}
\subsection{Indirect identification}
In Sections \ref{sec:linregression} and \ref{sec:matrixregression} the regression problems were formulated using the {\em direct} system identification approach, where the values $\{x_k\}$ and $\{u_k\}$ are used during estimation. The regression problems can also be formulated using the {\em indirect} approach \cite{forssell1999closed}, where the control matrices $F$ and $G$ are {\em known}, which is often the case in an adaptive control setting. 
By expanding the state transition equation we can reformulate\vspace{-1mm}
\eqref{equ:lss_st} and \eqref{equ:lss_action} as follows
\begin{equation}
	\begin{aligned}
		x_{k+1} &= Ax_k + B(Fx_k + Gr_k) + w_k \\
 		&= (A+BF)x_k + BGr_k + w_k \\
        &= Cx_k + Dr_k + w_k.
	\end{aligned}
    \vspace{-1mm}	
\end{equation}
In the indirect identification scheme, using the values $\{x_k\}$ and $\{r_k\}$, the $C$ and $D$ matrices are estimated, from which $A$ and $B$ can be computed (under mild conditions on the control matrices $F$ and $G$). In this case
\eqref{sys-matvarreg} should of course be modified accordingly, for example, the rows of the regressor matrix $\Phi_{\text{id}}$ are given by $\bar{\varphi}_k \defeq (x_k\tr, r_k\tr)\tr$, and the true parameter $\Theta^*_{\text{id}}$ should contain matrices $C$ and $D$. 

The linear regression formulation can be obtained similarly to the approach of Section \ref{subsec:linregform}, hence omitted.

\vspace*{-1mm}
\subsection{Core Assumptions}
\label{sec:assumptions}

Henceforth, we will directly study the matrix-variate regression problem \eqref{sys-matvarreg}, irrespectively whether it came from the direct or the indirect version of the original identification problem. Our main assumptions will be as follows:
	\begin{assumption}{1}\label{assu:noise}
	The row vectors $\{w_k\}$\hspace{-0.3mm} of the noise matrix $W$\hspace{-0.7mm} are independent, and they are distributed symmetrically about zero {\em(}but, they can have different distributions{\em)}, i.e., for all $k$, random vectors $w_k$  and $-w_k$ have the same distribution.
	\end{assumption}
	
	This assumption is very mild, even milder than the one we get from standard SPS \cite{Csaji2015, volpe2015sign, Csaji2015cdc, Csaji2012b, Weyer2017}, if we try to apply it to a vectorized version of \eqref{sys-matvarreg}. 
    For example, such a vectorized approach would require that for all $k$ and $i$,
    \vspace{-1mm}
    \begin{equation}
    \label{axial-sym}
    \begin{aligned}
    (w_{k,1}, \dots, w_{k,i-1}, w_{k,i}, w_{k,i+1}, \dots, w_{k,d_x}) \overset{\scriptscriptstyle d}{=}\\
    (w_{k,1}, \dots, w_{k,i-1}, -w_{k,i}, w_{k,i+1}, \dots, w_{k,d_x}),
    \end{aligned}
    \vspace{-1mm}
    \end{equation}
    {\newtext where ``$\overset{\scriptscriptstyle d}{=}$'' denotes equality in distribution. This is strictly stronger than $w_k \overset{\scriptscriptstyle d}{=} -w_k$, as the latter is implied by \eqref{axial-sym}, but not the other way around.
    Moreover, \ref{assu:noise} also allows the components of the noise vector $w_k$ to be {\em correlated}.}

	\begin{assumption}{2}\label{assu:iv1}
    We are given a random matrix $\Psi \defeq (\psi_0, \dots, \psi_{n-1})\tr$ $\in \mathbb{R}^{n \times d}$, for which matrices $\Psi$\hspace{-0.3mm} and $W$\hspace{-1mm} are independent.
	\end{assumption}	

{\newtext The rows of matrix $\Psi$ are called the {\em instrumental variables}. %
In a typical situation, they are filtered past inputs using an auxiliary model \cite{Ljung1999}.
The motivation for using instrumental variables will be discussed in Section \ref{sec:intuitiv_idea}.}
	\begin{assumption}{3}\label{assu:iv2}
	Matrix $\Psi\tr\Phi$ is full rank almost surely.
	\end{assumption}

Observe that from assumption $\ref{assu:iv2}$ it follows that matrix $\Psi\tr\Psi$ is also full rank (invertible) almost surely. We also introduce the notations 
$V_n \defeq \nicefrac{1}{n}\, \Psi\tr\Phi,$ and $P_n \defeq \nicefrac{1}{n}\, \Psi\tr\Psi.$

Note, as well, that instrumental variable based identification methods typically also assume that the instrumental variables, $\{\psi_k\}$, and the states, $\{x_k\}$, are correlated. 
In the strict sense, this assumption is not needed to construct exact confidence regions for the true parameter matrix $\Theta^*$, but an asymptotic correlation assumption \ref{assu:inv_v} is crucial to prove consistency.

\vspace{-1mm}
\section{Instrumental Variable Methods}\label{sec:lsandasym}
\vspace*{-1mm}
Now, we briefly overview identification with {\em instrumental variable} (IV) methods and its asymptotic theory \cite{Ljung1999}.

\vspace*{-1mm}
\subsection{Instrumental Variable Estimate (IVE)}
By defining the prediction of a particular $\Theta$ matrix by $\hat{Y} \defeq \Phi\Theta$, we compute the {\em prediction errors} (residuals) as
$	\CE(\Theta) \defeq Y - \hat{Y} = Y -\Phi\Theta$.
IVE is obtained by solving
\begin{equation}\label{normalequ}
	\Psi\tr \CE(\Theta) = \Psi\tr(Y-\Phi\Theta) = 0,
\end{equation}
then, assuming \ref{assu:iv2}, the IVE can be calculated as
\begin{equation}\label{iv-estimate}
	\widehat{\Theta}^{\scriptscriptstyle \text{IV}} = (\Psi\tr\Phi)^{-1}\Psi\tr Y,
\end{equation}
which is usually a {\em biased} estimator of $\Theta^*$, under \ref{assu:noise}-\ref{assu:iv2}, unless we assume that $\Phi$ and $W$ are independent.
\vspace*{-1mm}
\subsection{Limiting Distribution of IVE}
Now, assume that $Y_n$ is $\mathbb{R}^n$-valued, i.e., $\{x_k\}$ is a scalar process. Then, the ``true'' parameter is a constant vector, $\theta^* \in \mathbb{R}^{d_{\theta}}$\!\!, and the IVE is a random vector given by \eqref{iv-estimate}.

If $\{w_k\}$ are i.i.d., zero mean, and each has variance $\sigma^2 \in (0, \infty)$, then the IVE is {\em asymptotically Gaussian}:\vspace{-1mm}
\begin{equation}
	\label{eq:iv-asymp-normal}
	\sqrt{n}\,(\hat{\theta}^{\scriptscriptstyle \text{IV}}_n - \theta^*) \xrightarrow{\enskip d \enskip}\mathcal{N}(0, \sigma^2R^{-1}),
\end{equation}
as $n \rightarrow \infty$, with 
$R \defeq \lim_{n \rightarrow \infty} R_n$, where $\{R_n\}$ are\vspace{-1.5mm}
\begin{equation}
	R_n\, \defeq\, \left[\frac{1}{n}\Psi_n\tr \Phi_n\right]\tr\left[\frac{1}{n}\Psi_n\tr \Psi_n\right]^{-1}\left[\frac{1}{n}\Phi_n\tr \Psi_n\right]\hspace{-4mm},
\end{equation}
assuming that the limit exists and is positive definite \cite{Ljung1999},
where we (explicitly) emphasized the dependences on $n$.

In case $\{x_k\}$ is vector-valued, we can reformulated it in a linear regression form \eqref{sys:linreg}. Then, the (vectorized) IVE is also asymptotically Gaussian, with appropriate modification of the instrumental variables, similarly to \eqref{eq:xik}.	

\vspace*{-1mm}
\subsection{Asymptotic Confidence Regions of IVE}
Assuming $\widehat{\theta}^{\scriptscriptstyle \text{IV}}_n$ is (vector-valued and) asymptotically normal, that is if 
\eqref{eq:iv-asymp-normal} holds, then we also have \cite{Ljung1999} \vspace{-1mm}
\begin{equation}
	\frac{n}{\sigma^2}(\hat{\theta}^{\scriptscriptstyle \text{IV}}_n - \theta^*)\tr R\hspace{0.5mm} (\hat{\theta}^{\scriptscriptstyle \text{IV}}_n - \theta^*)\xrightarrow{\enskip d \enskip} \chi^2(d_{\theta}),
\end{equation}
as $n \to \infty$, where $\chi^2(d_{\theta})$ is the chi-square distribution with $d_{\theta} = \mathrm{dim}(\theta^*)$ degrees of freedom. Matrix $R$ and variance $\sigma^2$ are unknown, but they can be estimated. Matrix $R_n$ is {\newtext an empirical estimate} of $R$ and an estimate of $\sigma^2$ is\vspace{-1mm}
\begin{equation}
	\hat{\sigma}_n^2 \, \defeq\, \frac{1}{n-d_{\theta}}\hspace{0.5mm}\lVert\hspace{0.3mm} \CE(\hat{\theta}^{\scriptscriptstyle \text{IV}}_n) \hspace{0.3mm}\rVert_2^2.
\end{equation}
Then, a $p$-level {\em confidence ellipsoid} can be defined as
\begin{equation}
		{\tilde{\Upsilon}}_{n,p} \,\defeq \left\{\theta \in \mathbb{R}^{d_{\theta}}:(\theta - \hat{\theta}^{\scriptscriptstyle \text{IV}}_n)\tr R_n (\theta - \hat{\theta}^{\scriptscriptstyle \text{IV}}_n) \leq \frac{\mu\hat{\sigma}_n^2}{n}\right\}\!,\!\!
\end{equation}
where $p = F_{\chi^2(d_{\theta})}(\mu)$ is the target confidence probability and $F_{\chi^2(d_{\theta})}$ is the cumulative distribution function of the $\chi^2$ distribution. Then, we have $\mathbb{P}(\theta^* \in {\tilde{\Upsilon}}_{n,p}) \approx p$. Note that these regions only have {\em asymptotic} guarantees. 

\vspace*{-1mm}
\section{Matrix-Variate Generalization of IV-SPS}
\label{sec:spsindicator}
In this section, we introduce
a {\em matrix-variate} generalization of the SPS method using instrumental variables. Our construction can be efficiently applied to solve closed-loop state-space identification problems, formulated as \eqref{sys-matvarreg}.

The scalar variant of IV-SPS for linear regression problems was originally suggested in \cite{volpe2015sign}, which can be used for state-space models after suitable vectorization \cite{giacomo2022state}. 

One of the advantages of MIV-SPS w.r.t.\ standard SPS methods \cite{Csaji2015,volpe2015sign} is that the matrix-variate regression formulation allows perturbing the components of the noise vectors simultaneously, which lead to relaxed assumptions on the noises. Moreover, the matrix-variate approach also allows a computationally more efficient ellipsoidal outer approximation construction than the original approach.

\subsection{Intuitive Idea of the SPS Construction}\label{sec:intuitiv_idea}
Using the matrix-variate regression formulation \eqref{sys-matvarreg}, the {\em normal equation} \eqref{normalequ} can be reformulated as
\begin{equation}
	\Psi\tr\Phi(\Theta^* - \Theta) + \Psi\tr W = 0.
\end{equation}
Following the SPS principle \cite{Csaji2015,volpe2015sign}, we introduce $m-1$ sign-perturbed sums, to construct the region, as follows
\begin{equation}
	\begin{aligned}
		H_i(\Theta) \,\defeq \;\, &\Psi\tr\Lambda_i(Y-\Phi\Theta) = \\& \Psi\tr\Lambda_i\Phi(\Theta^* - \Theta) + \Psi\tr\Lambda_i W,
	\end{aligned}
\end{equation}
for $i = 1, \dots, m-1$, where $\Lambda_i$ is a 
diagonal matrix, $\Lambda_i \doteq \text{diag}(\alpha_{i,1}, \dots, \alpha_{i,n})$,
and $\{\alpha_{i,k}\}$ are i.i.d. Rademacher variables, i.e., $\mathbb{P}(\alpha_{i,k} = 1) = \mathbb{P}(\alpha_{i,k} = -1) = \nicefrac{1}{2}$. A reference sum is also introduced, without sign-perturbations,
\begin{equation}
	H_0(\Theta)\, \defeq\, \Psi\tr(Y-\Phi\Theta) = \Psi\tr\Phi(\Theta^* - \Theta) + \Psi\tr W.
\end{equation}

For the true parameter matrix, $\Theta = \Theta^*$, we have
\begin{equation}
	H_0(\Theta^*) = \Psi\tr W,\quad \text{and}\quad H_i(\Theta^*) = \Psi\tr \Lambda_i W.
\end{equation}
$H_0(\Theta)$ and $H_i(\Theta)$ can be compared using the Frobenius norm $\norm{\cdot}\frob$. According to assumption \ref{assu:noise}, the noise vector sequence $\{w_k\}$ is independent and symmetric about zero, thus $H_0(\Theta^*)$ and $H_i(\Theta^*)$ have the same distribution, hence the probability that a particular $\norm{H_\ell(\Theta^*)}\frob^2$ is ranked at a given position according to a strict total order of the values $\{\norm{H_j(\Theta^*)}\frob^2\}_{j=0}^{m-1}$ 
is the same for all $\ell$. Note that this property is not trivial, since $\{H_j(\Theta^*)\}_{j=0}^{m-1}$ are dependent.

When $\norm{\Theta^* - \Theta}\frob$ is ``sufficiently'' large, the inequality $\forall\, i \neq 0\!:\! \norm{H_0(\Theta)}\frob^2 > \norm{H_i(\Theta)}\frob^2$, will hold, thus $\norm{H_0(\Theta)}\frob^2$ will eventually be the largest of the $m$ functions. The core idea behind the SPS method is to construct the confidence region based on the rankings of %
$\{\norm{H_j(\Theta)}\frob^2\}_{j=0}^{m-1}$ and exclude those $\Theta$ parameter matrices for which the reference, $\norm{H_0(\Theta)}\frob^2$, is among the $q$ largest. As we will show, the so constructed confidence set has {\em exactly} probability $1-q/m$ of containing the true parameter matrix. In the final formal description, the functions $\{\norm{H_j(\Theta)}\frob^2\}_{j=0}^{m-1}$ will also be weighted with a suitable ``shaping'' matrix.

Instrumental variables (IVs) are introduced, in order to be able to handle closed-loop systems. In these cases, if $\Psi$ was simply replaced by $\Phi$, $\{\norm{H_j(\Theta^*)}\frob^2\}_{j=0}^{m-1}$ would not have the same distribution, because the regressor matrix is {\em not} independent of the noise terms. The IVs counteract these dependencies, thus they ensure the validity of the confidence set construction in closed-loop setups.

\subsection{Formal Confidence Region Construction}
The SPS algorithm consists of two main parts, an initialization and an indicator function. In the {\em initialization} part, see Algorithm \ref{alg:sps_init}, the input is the user defined confidence probability $p$. The algorithm computes the main parameters and generates the random objects needed for the construction of the confidence region. In the {\em indicator} part, see Algorithm \ref{alg:sps_indi}, the input of the algorithm is a particular parameter matrix $\Theta$, and the function evaluates whether $\Theta$ is included in the confidence region.
\begin{algorithm}[t]
	\caption{MIV-SPS: Initialization$\,(p)$}\label{alg:sps_init}
	\begin{algorithmic}[1]
		\STATE Given the (rational) confidence probability $p \in (0,1)$, set integers $m > q >0$ such that $p = 1 - q/m$.
		\STATE Calculate the outer product\vspace{-1mm}
		\[ P_n \,\defeq\, \tfrac{1}{n}\Psi\tr\Psi,\vspace{-2mm} \]
		and find the principal square root $P_n^{\nicefrac{1}{2}}$, such that\vspace{-1mm}
		\[P_n^{\nicefrac{1}{2}}P_n^{\nicefrac{1}{2}} = P_n.\vspace{-5mm}\]
		\STATE Generate $n(m-1)$ i.i.d random signs $\{\alpha_{i,k}\}$ with\vspace{-1mm}
		\[\mathbb{P}(\alpha_{i,k} = 1)\, =\, \mathbb{P}(\alpha_{i,k} = -1)\, =\, \tfrac{1}{2},\vspace{-1mm}\] 
		for $i \in \{1,\dots,m-1\}$, $k \in \{1,\dots,n\}$ and construct the following matrices containing these random signs\vspace{-1mm}
		\[
		\Lambda_i \, \defeq \, \begin{bmatrix}
			\alpha_{i,1} & & \\
			& \ddots & \\
			& & \alpha_{i,n}
		\end{bmatrix}\!.\vspace{-3mm}
		\]
		\STATE Generate a uniform random permutation $\pi$ of the set $\{0,\dots, m - 1\}$, where each of the $m!$ possible permutations has the same probability $1/(m!)$ to be selected.
	\end{algorithmic}
\end{algorithm}
\begin{algorithm}[t]
	\caption{MIV-SPS: Indicator$\,(\Theta)$}\label{alg:sps_indi}
	\begin{algorithmic}[1]
		\STATE For the given $\Theta$, compute the prediction errors\vspace{-1mm}
		\[ \CE(\Theta) \defeq Y -\Phi\Theta.\vspace{-7mm} \]
		\STATE Evaluate
		\begin{equation*}
		\begin{aligned}
		 	S_0(\Theta) &\defeq \tfrac{1}{n}P_n^{-\frac{1}{2}}\Psi\tr \CE(\Theta), \\[1mm]
		 S_i(\Theta) &\defeq \tfrac{1}{n}P_n^{-\frac{1}{2}}\Psi\tr\Lambda_i\CE(\Theta),
		\end{aligned}
		\end{equation*}
		for $i \in \{1,...,m-1\}$.
		\STATE Order scalars $\{\norm{S_i(\Theta)}\frob^2\}$ according to $\succ_{\pi}$, which is the standard ``$>$'' relation with random tie-breaking: 
        $\norm{S_k(\Theta)}\frob^2 \succ_{\pi} \norm{S_j(\Theta)}\frob^2$ if and only if $(\norm{S_k(\Theta)}\frob^2 > \norm{S_j(\Theta)}\frob^2) \lor
        (\norm{S_k(\Theta)}\frob^2 = \norm{S_j(\Theta)}\frob^2 \,\land\, \pi(k) > \pi(j))$.
		\STATE Compute the rank $\mathcal{R}(\Theta)$ of $\norm{S_0(\Theta)}\frob^2$ in the ordering:\vspace{-1mm}
		\begin{equation*}
			\mathcal{R}(\Theta)\, \defeq\, \left[\,1+\sum_{i=1}^{m-1}\mathbb{I}\left(\norm{S_0(\Theta)}\frob^2 \succ_{\pi} \norm{S_i(\Theta)}\frob^2\right)\right]\!.
			\vspace{-1mm}
		\end{equation*}
		\STATE Return 1 if $\mathcal{R}(\Theta) \leq m - q$, otherwise return 0.
	\end{algorithmic}
\end{algorithm}

By using the indicator function, the $p$-level SPS confidence region for $\Theta^*$ can be defined as follows
\begin{equation}
	\Upsilon_n \,\defeq\, \big\{\,\Theta \in \mathbb{R}^{d\times d_x}\text{ : Indicator}(\Theta) = 1\,\big\}.
\end{equation}
For the instrumental variable estimate $\widehat{\Theta}^{\scriptscriptstyle \text{IV}}_n$ it holds that $S_0(\widehat{\Theta}^{\scriptscriptstyle \text{IV}}_n) = 0$; therefore, the IV estimate is always in the SPS confidence region, assuming that it is non-empty. 
\vspace{-1mm}
\section{Theoretical Guarantees}\label{sec:theory}
\subsection{Exact Coverage Probability}
The confidence regions constructed by MIV-SPS have guaranteed coverage probabilities for the true parameter matrix, for any finite sample size. More precisely, we have:

\begin{theorem}\label{theorem1}
	Assuming \ref{assu:noise}-\ref{assu:iv2}, the confidence probability of the constructed confidence region is exactly $p$, that is,
	\vspace{-1mm}
\begin{equation}
	\mathbb{P}\big(\hspace{0.3mm}\Theta^* \in \Upsilon_n\hspace{0.3mm}\big) \, =\, 1-\frac{q}{m} \,=\, p.
\end{equation}
\end{theorem}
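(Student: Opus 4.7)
The plan is to reduce the computation of $\BP(\Theta^* \in \Upsilon_n)$ to showing that the tuple $(\norm{S_0(\Theta^*)}\frob^2, \dots, \norm{S_{m-1}(\Theta^*)}\frob^2)$ is jointly exchangeable; once this is established, the random tie-breaking via $\pi$ forces the rank $\mathcal{R}(\Theta^*)$ of the reference sum to be uniformly distributed on $\{1,\dots,m\}$, so that $\BP(\mathcal{R}(\Theta^*) \leq m-q) = (m-q)/m = 1 - q/m = p$ is immediate.

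At $\Theta = \Theta^*$ the statistics collapse to $S_0(\Theta^*) = \tfrac{1}{n} P_n^{-\nicefrac{1}{2}}\Psi\tr W$ and $S_i(\Theta^*) = \tfrac{1}{n} P_n^{-\nicefrac{1}{2}}\Psi\tr \Lambda_i W$ for $i \geq 1$. To place the reference sum on the same footing as the perturbed ones, I would introduce an auxiliary diagonal Rademacher matrix $\Lambda_0$, independent of $\Psi$, $W$, and $\Lambda_1,\dots,\Lambda_{m-1}$. By \ref{assu:noise} the rows of $W$ are independent and symmetric about zero, so for any deterministic diagonal sign matrix $D$ one has $DW \overset{\scriptscriptstyle d}{=} W$; conditioning on $\Lambda_0 = D$ and integrating it out, together with \ref{assu:iv1}, yields $(\Psi, W, \Lambda_1, \dots, \Lambda_{m-1}) \overset{\scriptscriptstyle d}{=} (\Psi, \Lambda_0 W, \Lambda_1, \dots, \Lambda_{m-1})$. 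Consequently $(\Psi\tr W, \Psi\tr \Lambda_1 W, \dots, \Psi\tr \Lambda_{m-1} W) \overset{\scriptscriptstyle d}{=} (\Psi\tr \Lambda_0 W, \Psi\tr \Lambda_1 W, \dots, \Psi\tr \Lambda_{m-1} W)$. On the latter, the $\Lambda_j$ are i.i.d.\ and jointly independent of $(\Psi, W)$, so conditionally on $(\Psi, W)$ the $m$ matrices $\Psi\tr \Lambda_j W$ are i.i.d.\ in $j$ and hence exchangeable; premultiplying by the $\Psi$-measurable $\tfrac{1}{n}P_n^{-\nicefrac{1}{2}}$ and taking the Frobenius norm preserves exchangeability and unconditions to the desired claim.

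A short counting argument then closes the proof: for an exchangeable $m$-tuple $(X_0,\dots,X_{m-1})$ equipped with a random strict total order $\succ_\pi$ independent of the tuple, each coordinate has rank $r$ with probability $1/m$, because $\BP(\mathcal{R}(X_j) = r) = \BP(\mathcal{R}(X_0) = r)$ for every $j$ by exchangeability, and $\sum_{j=0}^{m-1}\BP(\mathcal{R}(X_j) = r) = 1$ by uniqueness of the rank under a strict total order. Applied to $\{\norm{S_j(\Theta^*)}\frob^2\}$, this yields $\BP(\mathcal{R}(\Theta^*) \leq m - q) = (m-q)/m = p$. I expect the main obstacle to lie in formalising the introduction of $\Lambda_0$ while carefully accounting for the independence structure among $\Psi$, $W$, and the perturbation matrices -- \ref{assu:iv1} is precisely what makes the symmetry argument survive in closed loop, where $\Phi$ and $W$ are typically \emph{not} independent, whereas \ref{assu:iv2} enters only to ensure that $P_n^{-\nicefrac{1}{2}}$ is almost surely well defined. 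A secondary subtlety is that the noise need not be continuously distributed, so ties among the $\norm{S_j(\Theta^*)}\frob^2$ may occur on a positive-probability event, which is why the random permutation $\pi$ must be folded into the exchangeability argument from the start rather than treated as an afterthought.
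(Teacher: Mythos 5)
Your proposal is correct and follows the same core strategy as the paper: symmetrize the reference sum so that all $m$ statistics become conditionally i.i.d.\ given the data, then argue that the rank of $\norm{S_0(\Theta^*)}\frob^2$ is uniform on $\{1,\dots,m\}$. The packaging differs in two ways. First, the paper symmetrizes per noise vector via its Lemma~4 (writing $w_k = s_k\tilde w_k$ with $s_k$ a Rademacher sign independent of $\tilde w_k$) and then invokes Lemma~1 on the products $\gamma_{i,k}=\alpha_{i,k}s_k$; you instead introduce a global auxiliary matrix $\Lambda_0$ and use $\Lambda_0 W \overset{\scriptscriptstyle d}{=} W$ directly, which is arguably cleaner. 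Second, you replace the paper's ``uniformly ordered'' machinery (Definition~1, Lemma~3) by an exchangeability-plus-counting argument, which is slightly more general since it does not require conditional i.i.d.-ness, only exchangeability together with a relabeling-invariant, independent tie-breaking order. One step is stated imprecisely: from $(\Psi, W, \Lambda_1,\dots,\Lambda_{m-1}) \overset{\scriptscriptstyle d}{=} (\Psi, \Lambda_0 W, \Lambda_1,\dots,\Lambda_{m-1})$, applying the map $(\psi,v,\lambda_{1:m-1})\mapsto(\psi\tr v,\psi\tr\lambda_1 v,\dots)$ to the right-hand tuple yields $(\Psi\tr\Lambda_0 W,\, \Psi\tr\Lambda_1\Lambda_0 W,\dots,\Psi\tr\Lambda_{m-1}\Lambda_0 W)$, not $(\Psi\tr\Lambda_0 W,\,\Psi\tr\Lambda_1 W,\dots)$ as you write, so your displayed ``consequently'' identity does not follow as stated. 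This is not fatal: the matrices $\Lambda_0, \Lambda_1\Lambda_0, \dots, \Lambda_{m-1}\Lambda_0$ are again i.i.d.\ diagonal Rademacher matrices independent of $(\Psi,W)$ --- this is exactly the content of the paper's Lemma~1 --- so the corrected right-hand tuple is still conditionally i.i.d.\ and the exchangeability conclusion survives. With that repair, and noting that premultiplication by the $\Psi$-measurable $P_n^{-\nicefrac{1}{2}}$ indeed preserves conditional i.i.d.-ness, your argument is complete; your closing remarks on the role of \ref{assu:iv1} in closed loop and on ties are both accurate and match the paper's treatment.
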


The proof of Theorem \ref{theorem1}, which is a generalization of the proof of \textit{Theorem 1} in \cite{volpe2015sign}, can be found in \ref{app:proof_theorem1}. 

Observe that the coverage probability is {\em exact}, the set is non-conservative.
The assumptions on the noises are rather mild: no specific (parametric) distributions are assumed, the noise sequence can even be non-stationary (i.e., each noise vector can have a different distribution), and there are no moment or density assumptions, either.

This result is more general than the one we would get by applying the corresponding result of \cite{volpe2015sign} to the vectorized version of \eqref{sys-matvarreg}, as the assumptions on the noises are milder.
\subsection{Strong Consistency}
Similarly to standard SPS, MIV-SPS is also uniformly {\em strongly consistent}, under the following 
assumptions:
\begin{assumption}{4}\label{assu:psd_p}
	There exists a positive definite matrix $P$ such that
	\begin{equation}
	\lim_{n\to\infty} \tfrac{1}{n}\, \Psi_n\tr\Psi_n = \lim_{n\to\infty} P_n = P\quad \text{{\em(}a.s.\hspace{0.5mm}{\em)}.}
	\end{equation}
\end{assumption}
\begin{assumption}{5}\label{assu:inv_v}
    There exists an invertible matrix $V$ such that
	\begin{equation}
    \lim_{n\to\infty} \tfrac{1}{n}\, \Psi_n\tr\Phi_n = \lim_{n\to\infty} V_n = V\quad \text{{\em(}a.s.\hspace{0.5mm}{\em)}.}
	\end{equation}
\end{assumption}
\begin{assumption}{6}\label{assu:r_grr}
The following growth rate restriction holds almost surely for the rows of the regression matrices:
\vspace{-1mm}
\begin{equation}
\sum_{k=1}^{\infty}\frac{\norm{\varphi_k}^4}{k^2} <\, \infty.
\end{equation}
\end{assumption}
\begin{assumption}{7}\label{assu:i_grr}
The following growth rate restriction holds almost surely for the rows of the instrumental variable matrices:
\vspace{-1mm}
\begin{equation}
\sum_{k=1}^{\infty}\frac{\norm{\psi_k}^4}{k^2} <\, \infty.
\end{equation}
\end{assumption}
\begin{assumption}{8}\label{assu:var_grr}
Finally, the variance growth of the rows of the noise matrices satisfy the following condition:
\vspace{-2mm}
\begin{equation}
\sum_{k=1}^{\infty}\frac{\Big(\mathbb{E}\big[\norm{w_k}^2\big]\Big)^2}{k^2} <\, \infty.
\end{equation}
\end{assumption}

Note that these assumptions are also rather mild. For example, if the regressors (or the instrumental variables) are bounded, then \ref{assu:r_grr} (or, respectively, \ref{assu:i_grr}) is satisfied. If the noise vectors $\{w_t\}$ are square integrable and i.i.d., then \ref{assu:var_grr} holds. However, these conditions allow unbounded regressors and instrumental variables as well as noise terms whose variances grow to infinity. We also note that \ref{assu:inv_v} ensures the asymptotic correlation of the instrumental variables and the regressors, which is crucial for consistency.

\begin{theorem}\label{theorem2}
	Assuming \ref{assu:noise}-\ref{assu:var_grr}, $\forall\,\varepsilon > 0,$ we have\vspace{-1mm}
\begin{equation}
\mathbb{P}\bigg(\bigcup_{k=1}^\infty \bigcap_{n=k}^\infty\! \big\{\,\Upsilon_n \subseteq \mathcal{B}_{\varepsilon}(\Theta^*) \big\} \bigg) =\, 1,
\vspace{-1mm}
\end{equation}
where $\mathcal{B}_{\varepsilon}(\Theta^*) \defeq \{\, \Theta \in \mathbb{R}^{d \times d_x}: \norm{\hspace{0.3mm}\Theta - \Theta^*}\frob \le \varepsilon\, \}$.
\end{theorem}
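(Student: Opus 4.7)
The plan is to show, almost surely, that every $\Theta$ with $\norm{\Theta-\Theta^*}\frob \ge \varepsilon$ is eventually excluded from $\Upsilon_n$. It suffices to prove that $\norm{S_0(\Theta)}\frob^2$ strictly dominates every $\norm{S_i(\Theta)}\frob^2$, $i=1,\ldots,m-1$, for all sufficiently large $n$; then $\mathcal{R}(\Theta)=m>m-q$ and the indicator returns $0$. The main subtlety will be making this domination uniform over the unbounded complement of $\mathcal{B}_\varepsilon(\Theta^*)$.

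I would start by decomposing: setting $\Delta \defeq \Theta^*-\Theta$ and using $Y=\Phi\Theta^*+W$,
\begin{equation*}
    S_0(\Theta) = V'_n\Delta + U_{n,0},\quad S_i(\Theta) = M_{n,i}\Delta + U_{n,i},
\end{equation*}
where $V'_n \defeq P_n^{-\nicefrac{1}{2}}V_n$, $M_{n,i}\defeq \tfrac{1}{n}P_n^{-\nicefrac{1}{2}}\Psi\tr\Lambda_i\Phi$, $U_{n,0}\defeq\tfrac{1}{n}P_n^{-\nicefrac{1}{2}}\Psi\tr W$, and $U_{n,i}\defeq\tfrac{1}{n}P_n^{-\nicefrac{1}{2}}\Psi\tr\Lambda_i W$. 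Next I would establish the a.s.\ limits: (a) $V'_n \to V'\defeq P^{-\nicefrac{1}{2}}V$, which is invertible, from \ref{assu:psd_p}, \ref{assu:inv_v}, and continuity of matrix square root and inversion at positive definite matrices; and (b) $M_{n,i}\to 0$, $U_{n,0}\to 0$, $U_{n,i}\to 0$. For (b) each scalar entry has the form $\tfrac{1}{n}\sum_{k=0}^{n-1}Z_k$, and with suitable conditioning the $Z_k$ become independent with zero mean: for $U_{n,0}$ condition on $\Psi$ (independent of $W$ by \ref{assu:iv1}, with $\EE[w_k]=0$ by \ref{assu:noise}); for $U_{n,i}$ additionally condition on $\Lambda_i$ (entries bounded and independent of $\Psi,W$); and for $M_{n,i}$ condition on $\Psi,\Phi$, so that the freshly drawn Rademachers $\alpha_{i,k}$ make each summand zero mean. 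Kolmogorov's SLLN then gives convergence whenever $\sum_k \text{Var}(Z_k)/k^2 < \infty$ a.s., and each such condition can be reduced via Cauchy--Schwarz to a pair of the growth restrictions \ref{assu:r_grr}--\ref{assu:var_grr}, for instance
\begin{equation*}
    \sum_k \frac{\psi_{k,j}^2\,\EE[w_{k,l}^2]}{k^2} \le \Bigl(\sum_k \tfrac{\norm{\psi_k}^4}{k^2}\Bigr)^{\!\nicefrac{1}{2}}\Bigl(\sum_k \tfrac{(\EE\norm{w_k}^2)^2}{k^2}\Bigr)^{\!\nicefrac{1}{2}}\!\!,
\end{equation*}
which is finite a.s.\ by \ref{assu:i_grr} and \ref{assu:var_grr}. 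The $M_{n,i}$ case pairs \ref{assu:r_grr} with \ref{assu:i_grr}, and the $U_{n,i}$ case mirrors $U_{n,0}$. Left-multiplication by the a.s.\ convergent $P_n^{-\nicefrac{1}{2}}$ preserves these limits.

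Once those limits are in hand, I would set $\sigma_*\defeq\sigma_{\min}(V')>0$. Eventually almost surely, $\sigma_{\min}(V'_n)\ge\sigma_*/2$, $\max_i\norm{M_{n,i}}\frob\le\sigma_*/4$, and $\max_i\norm{U_{n,i}}\frob\le\delta$ for any preassigned $\delta>0$ (the maxima run over finitely many $i$). Combining $\norm{V'_n\Delta}\frob\ge\sigma_{\min}(V'_n)\norm{\Delta}\frob$, Frobenius submultiplicativity, and the triangle inequality would yield
\begin{equation*}
    \norm{S_0(\Theta)}\frob \ge \tfrac{\sigma_*}{2}\norm{\Delta}\frob - \delta, \quad \norm{S_i(\Theta)}\frob \le \tfrac{\sigma_*}{4}\norm{\Delta}\frob + \delta.
\end{equation*}
Picking $\delta<\sigma_*\varepsilon/16$ then gives $\norm{S_0(\Theta)}\frob>\norm{S_i(\Theta)}\frob$ uniformly in $\Theta$ with $\norm{\Delta}\frob\ge\varepsilon$ and in $i$, forcing $\mathcal{R}(\Theta)=m>m-q$, so $\Theta\notin\Upsilon_n$. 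Hence $\Upsilon_n\subseteq\mathcal{B}_\varepsilon(\Theta^*)$ for all sufficiently large $n$, almost surely. I expect the main obstacle to be step (b) in the closed-loop regime, where $\Phi$ may depend on past noise; the conditioning arguments above exploit $\Psi\perp W$ from \ref{assu:iv1} together with the freshness of the Rademacher signs, reducing the task to Kolmogorov's SLLN under only the mild growth conditions \ref{assu:r_grr}--\ref{assu:var_grr}.
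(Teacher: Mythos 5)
Your proposal is correct and follows essentially the same route as the paper's proof: the same decomposition of $S_0(\Theta)$ and $S_i(\Theta)$ into a deterministic drift term plus a noise term, the same use of Kolmogorov's SLLN after conditioning (on $\Psi$, and additionally on $\Phi$ and the noise for the sign-perturbed terms) with Cauchy--Schwarz reducing the variance conditions to pairs of the growth restrictions \ref{assu:r_grr}--\ref{assu:var_grr}, and the same uniformization over the complement of $\mathcal{B}_{\varepsilon}(\Theta^*)$ via a lower bound involving $\sigma_{\min}(P^{-\nicefrac{1}{2}}V)$. Your explicit eventual bounds $\sigma_{\min}(V'_n)\ge\sigma_*/2$ and $\norm{M_{n,i}}\frob\le\sigma_*/4$ are a minor, equally valid variant of the paper's $\delta$-based inequalities leading to its threshold $\kappa_0(\delta)$.
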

The proof can be found in \ref{app:proof_theorem2}. The theorem studies the event that, given $\varepsilon > 0$, after at most finite number of observations, the confidence regions $\{{\Upsilon}_n\}$ will remain included in the $\varepsilon$ ball around $\Theta^*$, as the sample size, $n$, increases. This {\em tail event} happens almost surely.

\section{Ellipsoidal Outer Approximation}\label{sec:eoa}
The MIV-SPS Indicator function evaluates whether a given parameter matrix $\Theta$ belongs to the confidence region, by comparing the $\{\norm{S_i(\Theta)}\frob^2\}_{i=1}^{m-1}$ functions with $\norm{S_0(\Theta)}\frob^2$. In many applications, we need a compact representation of the {\em whole} confidence set, to make it easier working with it; for example, in robust control.

In this section we introduce a generalization of the {\em ellipsoidal outer approximation} (EOA) given in \cite{Csaji2015} and \cite{volpe2015sign}. 
Our construction results in a semidefinite programming (convex) optimization problem with significantly less constraints than the one we would get by applying the approach of \cite{Csaji2015} and \cite{volpe2015sign} to the vectorized system.
\subsection{Ellipsoidal Outer Approximation}
Using the definition of the Frobenius norm, $\norm{S_0(\Theta)}\frob^2$ can be reformulated as follows
\vspace{-1mm}
\begin{align}
	&\norm{S_0(\Theta)}\frob^2 \nonumber \\ 
	&= \Tr\!\left(\left[\tfrac{1}{n}\Psi\tr(Y - \Phi\Theta)\right]\tr\! P_n^{-1} \left[\tfrac{1}{n}\Psi\tr(Y - \Phi\Theta)\right] \right) \nonumber \\ 
	&= \Tr\!\left(\left[\tfrac{1}{n}\Psi\tr\Phi(\Theta- \widehat{\Theta}^{\scriptscriptstyle \text{IV}}_n)\right]\tr\! P_n^{-1} \left[\tfrac{1}{n}\Psi\tr\Phi(\Theta- \widehat{\Theta}^{\scriptscriptstyle \text{IV}}_n)\right] \right) \nonumber \\
	&= \Tr\!\left((\Theta- \widehat{\Theta}^{\scriptscriptstyle \text{IV}}_n)\tr V_n\tr P_n^{-1} V_n(\Theta- \widehat{\Theta}^{\scriptscriptstyle \text{IV}}_n)\right)\nonumber \\
    &= \norm{P_n^{\nicefrac{-1}{2}} V_n(\Theta- \widehat{\Theta}^{\scriptscriptstyle \text{IV}}_n)}\frob^2\!.
\end{align}
To obtain an ellipsoid over-bound, consider the set given by the particular values of $\Theta$ for which $\norm{S_i(\Theta)}\frob^2 \geq \norm{S_0(\Theta)}\frob^2$ holds true for $q$ of the $\norm{S_i(\Theta)}\frob^2$, that is
\vspace{-1mm}
\begin{equation}
	\begin{aligned}
		&\Upsilon_n \subseteq\!
		\biggl\{\Theta \in \mathbb{R}^{d\times d_x}\!:\hspace{-3mm}  
        &\norm{P_n^{\nicefrac{-1}{2}} V_n(\Theta- \widehat{\Theta}^{\scriptscriptstyle \text{IV}}_n)}\frob^2
        \leq r(\Theta) \biggr\},
	\end{aligned}
	\vspace{-1mm}
\end{equation}
where $r(\Theta)$ is the $q$th largest value of $\{\norm{S_i(\Theta)}\frob^2\}_{i=1}^{m-1}$. 

{\newtext Our objective now is to determine an ellipsoidal over-bound with a parameter-independent radius, $r$, instead of the parameter-dependent $r(\Theta)$. This outer approximation is a guaranteed confidence region for any finite sample} with a compact representation given by $\widehat{\Theta}^{\scriptscriptstyle \text{IV}}_n$, $V_n$, $P_n$ and $r$.

\subsection{Convex Programming Formulation}\label{sec:cvxprog}
Comparing $\norm{S_0(\Theta)}\frob^2$ with a single $\norm{S_i(\Theta)}\frob^2$, we have
\vspace{-1mm}
\begin{equation}
	\begin{aligned}
		&\left\{\Theta : \norm{S_0(\Theta)}\frob^2 \leq \norm{S_i(\Theta)}\frob^2\right\} \subseteq \\[1mm] &\Big\{\Theta : \norm{S_0(\Theta)}\frob^2 \leq \max_{\Theta : \norm{S_0(\Theta)}\frob^2 \leq \norm{S_i(\Theta)}\frob^2} \norm{S_i(\Theta)}\frob^2\Big\}
	\end{aligned}
	\vspace{-1mm}
\end{equation}
Inequality  $\norm{S_0(\Theta)}\frob^2 \leq \norm{S_i(\Theta)}\frob^2$ can be reformulated as
\vspace{-1mm}
\begin{align}
	&\Tr\!\left((\Theta- \widehat{\Theta}^{\scriptscriptstyle \text{IV}}_n)\tr V_n\tr P_n^{-1} V_n(\Theta- \widehat{\Theta}^{\scriptscriptstyle \text{IV}}_n)\right) \leq \nonumber \\
	&\Tr\!\left(\left[\tfrac{1}{n}\Psi\tr\Lambda_i(Y-\Phi\Theta)\right]\tr P_n^{-1}\left[\tfrac{1}{n}\Psi\tr\Lambda_i(Y-\Phi\Theta)\right]\right) = \nonumber \\[0.5mm]
	&\Tr\!\left(\Theta\tr Q_i\tr P_n^{-1}Q_i\Theta - \Theta\tr Q_i\tr P_n^{-1}M_i \right. \nonumber \\[1mm]
	& \left. - M_i\tr P_n^{-1}Q_i\Theta + M_i\tr P_n^{-1}M_i\right)\!,
\end{align}
where matrices $Q_i$ and $M_i$ are defined as
\vspace{-1mm}
\begin{equation}
\begin{aligned}
	Q_i&\, \defeq\, \tfrac{1}{n}\Psi\tr\Lambda_i\Phi,\\[1mm]
	M_i& \,\defeq\, \tfrac{1}{n}\Psi\tr\Lambda_iY.
\end{aligned}
\vspace{-2mm}
\end{equation}
Observe that
\vspace{-1mm}
\begin{equation}
	\begin{aligned}
		&\max_{\Theta : \norm{S_0(\Theta)}\frob^2 \leq \norm{S_i(\Theta)}\frob^2} \norm{S_i(\Theta)}\frob^2 = \\ & \max_{\Theta : \norm{S_0(\Theta)}\frob^2 \leq \norm{S_i(\Theta)}\frob^2} \norm{S_0(\Theta)}\frob^2.
	\end{aligned}
\end{equation}
After defining matrix $Z \defeq P_n^{\nicefrac{-1}{2}}V_n(\Theta - \widehat{\Theta}^{\scriptscriptstyle \text{IV}}_n)$, the quantity 
$\max_{\Theta : \norm{S_0(\Theta)}\frob^2 \leq \norm{S_i(\Theta)}\frob^2} \norm{S_i(\Theta)}\frob^2$ 
can be computed by 
\vspace{-1mm}
\begin{equation}
	\begin{aligned}
		\max \quad & \norm{Z}^2\frob\\[1mm]
	  	\textrm{s.t.} \quad &\Tr(Z\tr A_iZ + Z\tr B_i + B_i\tr Z + C_i) \,\leq\, 0, \\
	\end{aligned}
\end{equation}
where $A_i$, $B_i$ and $C_i$ are defined as follows
\vspace{-1mm}
\begin{equation}
\begin{aligned}
	A_i \defeq \;& I-P_n^{\frac{1}{2}\mathrm{T}}V_n^{-\mathrm{T}}Q_i\tr P_n^{-1}Q_iV_n^{-1}P_n^{\frac{1}{2}}\\
	B_i \defeq \;& P_n^{\frac{1}{2}\mathrm{T}}V_n^{-\mathrm{T}}Q_i\tr P_n^{-1}(M_i-Q_i\widehat{\Theta}^{\scriptscriptstyle \text{IV}}_n)\\[1mm]
	C_i \defeq \;&-(\widehat{\Theta}^{\scriptscriptstyle \text{IV}}_n)\tr Q_i\tr P_n^{-1}Q_i\widehat{\Theta}^{\scriptscriptstyle \text{IV}}_n + (\widehat{\Theta}^{\scriptscriptstyle \text{IV}}_n)\tr Q_i\tr P_n^{-1}M_i \\ &+ M_i\tr P_n^{-1}Q_i\widehat{\Theta}^{\scriptscriptstyle \text{IV}}_n - M_i\tr P_n^{-1}M_i.
	\end{aligned}
\end{equation}
This program is {\em not} convex, however weak Lagrange duality holds, therefore an {\em upper bound} on the value of the above optimization problem can be obtained from the {\em dual} of the problem, for which an equivalent problem is
\vspace{-1mm}
\begin{equation}\label{cvxprog}
	\begin{aligned}
		\min \quad & \Tr(\Gamma - \lambda C_i)\\[1mm]
		\textrm{s.t.} \quad &\lambda \geq 0 \\[1mm]
		\quad &\begin{bmatrix}
			-I + \lambda A_i & \lambda B_i \\
			\lambda B_i\tr & \Gamma\\
		\end{bmatrix} \succeq 0,
		\\
	\end{aligned}
\end{equation}
where $\Gamma$ is a symmetric matrix and $\succeq 0$ denotes that the matrix is positive semidefinite. The derivation of this formulation can be found in \ref{app:cvxformulation}. The optimization problem \eqref{cvxprog} is now {\em convex} and can be solved using CVXPY \cite{diamond2016cvxpy}, e.g., with solver CVXOPT or MOSEK.

Let $\gamma_i^*$ be the solution of the program \eqref{cvxprog}. We have
\vspace{-1mm}
\begin{equation}
	\begin{aligned}
		&\left\{\Theta : \norm{S_0(\Theta)}\frob^2 \leq \norm{S_i(\Theta)}\frob^2\right\} \\
		&\subseteq \left\{\Theta : \norm{S_0(\Theta)}\frob^2 \leq \gamma_i^*\right\},
	\end{aligned}
    \vspace{-2mm}	
\end{equation}
therefore
\vspace{-1mm}
\begin{equation}
\label{upsilon_subset_hat_upsilon}
	\begin{aligned}
		&\Upsilon_n\, \subseteq\, \hat{\Upsilon}_n \defeq \biggl\{\Theta \in \mathbb{R}^{d\times d_x}: \\ 
        &\norm{P_n^{\nicefrac{-1}{2}} V_n(\Theta- \widehat{\Theta}^{\scriptscriptstyle \text{IV}}_n)}\frob^2
        \leq r\biggr\},
	\end{aligned}
	\vspace{-1mm}
\end{equation}
where $r$ is the $q$th largest value of $\{\gamma_i^*\}_{i=1}^{m-1}$. $\hat{\Upsilon}_n$ is an {\em ellipsoidal outer approximation} of $\Upsilon_n$, and from \eqref{upsilon_subset_hat_upsilon} we have
\vspace{-1mm}
\begin{equation}
	\mathbb{P}\big(\Theta^* \in \hat{\Upsilon}_n\big) \,\geq\, 1-\dfrac{q}{m} \,=\, p,
\end{equation}
for any $n$. The method is summarized by Algorithm \ref{alg:sps_eoa}.
\begin{algorithm}
	\caption{MIV-SPS: Ellipsoidal Outer Approximation}\label{alg:sps_eoa}
	\begin{algorithmic}[1]
		\STATE Compute the instrumental variable estimate,
		\vspace{-1mm}
		\[ \widehat{\Theta}^{\scriptscriptstyle \text{IV}}_n \defeq (\Psi\tr\Phi)^{-1}\Psi\tr Y. \]
		\vspace{-6mm}
		\STATE For $i \in \{1, \dots, m-1\}$ solve the optimization problem \eqref{cvxprog} and let $\gamma_i^*$ be the optimal value of the program.
		\STATE Let $r$ be the $q$th largest value of $\{\gamma_i^*\}_{i=1}^{m-1}$.
		\STATE The ellipsoidal outer approximation of the SPS confidence region is given by
		\vspace{-1mm}
		\begin{equation*}
			\begin{aligned}
				&\hat{\Upsilon}_n \defeq \left\{\Theta \in \mathbb{R}^{d\times d_x}: %
                \norm{P_n^{\nicefrac{-1}{2}} V_n(\Theta- \widehat{\Theta}^{\scriptscriptstyle \text{IV}}_n)}\frob^2
                \leq r\right\}\!.
			\end{aligned}
		\vspace*{-1mm}
		\end{equation*}
	\end{algorithmic}
\end{algorithm}

An important difference between the IV-SPS \cite{volpe2015sign} and the MIV-SPS ellipsoidal outer approximation computation, apart from the reduced assumptions on the noise vectors, is the size of the constraint matrix in the resulting convex programming formulations. Let $L \in \mathbb{R}^{l\times l}$ be the constraint matrix we get in the direct LSS identification problem. Then, in the IV-SPS case, $l$ grows according to $l=d_x^2 + d_xd_u + 1$ with the dimensions of the inputs and the outputs. On the other hand, in the MIV-SPS case, it just grows according to $l=2\,d_x + d_u$ with the dimensions.

\vspace{-1mm}
\section{Numerical Experiments}\label{sec:experiments}
In this section we present numerical experiments evaluating MIV-SPS for a closed-loop LSS system. The ellipsoidal outer approximations of MIV-SPS are compared to that of IV-SPS, after suitable vectorization.
The effect of using different exploitation-exploration trade-offs in the feedback is also studied experimentally. Finally, the sample efficiency of the ellipsoid over-bound confidence regions of MIV-SPS are investigates for different feedback rules.
\subsection{Experimental Setup}\label{sec:exp_setup}
We consider the following closed-loop LSS system
\vspace{-1mm}
\begin{equation}
\label{exp_setting}
\begin{aligned}
	x_{k+1}\,  =\;\,& Ax_k \,+\, Bu_k \,+\, w_k,\\[1mm]
	u_k\,  =\; \,& \varepsilon Kx_k \,+\, (1-\varepsilon)r_k,
\end{aligned}
\vspace{-1mm}
\end{equation}
 {\newtext where $A$ is a stable 
matrix (i.e., its eigenvalues are inside the unit circle), the elements of the square matrix $B$} are sampled from the uniform distribution $\mathcal{U}(1, 10)$ and $K$ is the optimal controller for $\varepsilon = 1$, given the quadratic cost
\vspace{-1mm}
{\newtext 
\begin{equation}
	J(K) = \sum_{k=0}^{n-1} q\norm{x_k}^2 + v \norm{u_k}^2,
	\vspace{-0.5mm}
\end{equation}
i.e., an LQR problem \cite{Bertsekas2007}.
In the cost function the values of the scalar cost weights are $q = v = 1$, unless stated otherwise.}
The scalar $\varepsilon \in (0,1)$ is the {\em exploitation} parameter. Sequences $\{w_k\}$ and $\{r_k\}$ contain independent random vectors, where the references are standard normal, 
while the noise distributions vary with the experiments.

We consider a finite data sample of size $n$ that contains input-output-reference tuples, $\{\langle u_k, x_k, r_k\rangle\}_{k = 0}^{n-1}$.

The instrumental variables $\{\psi_k\}$
are generated from the data.
We present the direct and the indirect cases side-by-side.
First, least squares estimates of the parameter matrices
are computed.
The system parameters $A$ and $B$ are estimated in the direct case, while in the indirect case $C$ and $D$ are estimated. Let $\widehat{\Theta}^{\scriptscriptstyle \text{LS}}$ and  $\widehat{\Theta}^{\scriptscriptstyle \text{LS}}_{\scriptscriptstyle \text{id}}$ denote the LS estimate in the direct and the indirect case, respectively:
\begin{align}
	&\widehat{\Theta}^{\scriptscriptstyle \text{LS}}\, \defeq\, (\Phi\tr\Phi)^{-1}\Phi\tr Y,\quad
	&\widehat{\Theta}^{\scriptscriptstyle \text{LS}}_{\scriptscriptstyle \text{id}}\, \defeq\, (\Phi_{\scriptscriptstyle \text{id}}\tr\Phi_{\scriptscriptstyle \text{id}})^{-1}\Phi_{\scriptscriptstyle \text{id}}\tr Y,
\vspace{-2mm}	
\end{align}
where
\vspace{-1mm}
\begin{align}
	&\widehat{\Theta}^{\scriptscriptstyle \text{LS}} \, \defeq\, \begin{bmatrix}
		\hat{A}\tr \\
		\hat{B}\tr
	\end{bmatrix}\!,
	&\widehat{\Theta}^{\scriptscriptstyle \text{LS}}_{\scriptscriptstyle \text{id}} \, \defeq\, \begin{bmatrix}
		\hat{C}\tr \\
		\hat{D}\tr
	\end{bmatrix}\!.
\end{align}
Using the estimates $\hat{A}$ and $\hat{B}$, or $\hat{C}$ and $\hat{D}$ in the indirect case, noiseless state sequences are generated as follows
\begin{align}
	&\bar{x}_{k+1} \, \defeq\, \hat{A}\bar{x}_k + \hat{B}r_k,\quad
	&\bar{x}_{k+1}^{\scriptscriptstyle \text{id}} \, \defeq\, \hat{C}\bar{x}_k + \hat{D}r_k.
\end{align}
Using the noiseless state sequences, the instrumental variables can be defined in the direct and indirect cases as
\begin{align}\label{ivinexperiment}
	&\psi_k\, \defeq\, (\bar{x}_k\tr, r_k\tr)\tr\!\!,\quad &\psi_k^{\scriptscriptstyle \text{id}}\, \defeq\, ((\bar{x}_k^{\scriptscriptstyle \text{id}})\tr\!\!, r_k\tr)\tr\!\!.
\end{align}

In the strict sense, the 
IVs defined in \eqref{ivinexperiment} are not completely independent of the noise, because the least squares estimates $\widehat{\Theta}^{\scriptscriptstyle \text{LS}}$ and $\widehat{\Theta}^{\scriptscriptstyle \text{LS}}_{\scriptscriptstyle \text{id}}$ depend on the noise. However, in both estimates the noises are ``averaged out'', hence their effect is minimal. If the 
LS estimates were constructed from a data sample that is independent of the 
sample used by the SPS algorithm, then the obtained regions would be {\newtext guaranteed to have exact coverage.}
The difference between {\newtext these} two constructions considering the experimental results is negligible, therefore we only used one data sample.

In the experiments the conservatism of the ellipsoidal outer approximations of the confidence regions {\newtext is} investigated using Monte Carlo simulations. In these simulations we set the confidence probability to $p = 0.9$ and computed an empirical probability $\hat{p} = \nicefrac{s_{\scriptscriptstyle \text{in}}}{s}$, where $s_{\scriptscriptstyle \text{in}}$ is the number of simulations, where the true parameter matrix $\Theta^*$ is included in the ellipsoid, and $s$ is number of simulations.
\subsection{Comparison of MIV-SPS and IV-SPS}
As discussed in Section \ref{sec:cvxprog}, a crucial difference between the MIV-SPS and IV-SPS ellipsoidal outer approximation (EOA) computations is the size of the constraint matrices in the 
SDP formulations. As a consequence, solving the convex program in the MIV-SPS case should be faster than in the IV-SPS case. 
Table \ref{table:comparsion_ls_mv_efficiency} presents the number of flops and the required CPU time for solving the convex programs in the MIV-SPS EOA {\em relative} to those of the IV-SPS EOA, for different dimensions.
{\newtext The obtained results confirm that EOAs can be computed more efficiently by the MIV-SPS approach, especially in higher dimensions.}
\begin{table}[t]
	\caption{Flops and CPU time needed to compute the ellipsoidal outer approximation of MIV-SPS relative to IV-SPS, $n=500$, $s = 500$.}
	\vspace{2mm}
	\label{table:comparsion_ls_mv_efficiency}
	\setlength{\tabcolsep}{3pt}
	\centering
	\begin{tabular}{|p{45pt}|p{45pt}|p{55pt}|p{55pt}|}
		\hline
		Dim.&
		Params.&
		Rel. flops&
		Rel. time \\
		\hline
		\hline
		1&
		2& 
		0.839& 
		1.03\\
		\hline
		2&
		8& 
		0.427& 
		0.923\\
		\hline
		3&
		18& 
		0.146& 
		0.528\\
		\hline
		4&
		32&
		0.056& 
		0.195\\
		\hline
	\end{tabular}
	\vspace{-2mm}
\end{table}

The experiments discussed in this section, i.e., Tables \ref{table:comparsion_ls_mv_efficiency}, \ref{table:comparsion_ls_mv_p_normnoise}, \ref{table:comparsion_ls_mv_p_bimodnormnoise} and \ref{table:comparsion_ls_mv_p_nonstatnoise}, are all based on $s = 500$ Monte Carlo simulations, for each dimension, with sample size $n=500$ using an open-loop system ($\varepsilon=0$) and direct identification.

Although, strong duality is not proven yet for the reformulation of \eqref{cvxprog}, only weak duality is exploited, the experimental evidences are indicative of the phenomenon that it holds. The empirical coverage probabilities (i.e., the ratio of cases when the regions cover the true parameter) in the MIV-SPS EOA case $\hat{p}_{\scriptscriptstyle \text{MIV}}$ are very close to those of IV-SPS EOA $\hat{p}_{\scriptscriptstyle \text{IV}}$, for which strong duality is known. 

We have compared the empirical coverage probabilities $\hat{p}$ of the asymptotic 
ellipsoids $\hat{p}_{\scriptscriptstyle \text{AS}}$, 
see Section \ref{sec:lsandasym}, the empirical probabilities $\hat{p}_{\scriptscriptstyle \text{IN}}$ for the MIV-SPS Indicator algorithm, presented in Section \ref{sec:spsindicator} {\newtext (the ratio of cases when the Indicator algorithms 
return 1 for the true parameter)}, and the empirical probabilities $\hat{p}_{\scriptscriptstyle \text{IV}}$, $\hat{p}_{\scriptscriptstyle \text{MIV}}$ for three 
noise scenarios. The target confidence probability was always $0.9$. 

The empirical coverage probabilities in case the process noise $\{w_k\}$ is a sequence of i.i.d.\ (multivariate) standard normal vectors is given in Table \ref{table:comparsion_ls_mv_p_normnoise}. It can be observed that the asymptotic and the MIV-SPS Indicator confidence regions are very precise and the SPS confidence ellipsoidal outer approximations become more conservative as the dimension increases, but the empirical coverage probabilities of MIV-SPS and IV-SPS stay close to each other.
\begin{table}[t]
	\caption{Comparison of empirical probabilities of covering the true parameter matrix for standard normal noises, $n=500$, $s = 500$.}
	\vspace{2mm}
	\label{table:comparsion_ls_mv_p_normnoise}
	\setlength{\tabcolsep}{3pt}
	\centering
	\begin{tabular}{|p{36pt}|p{36pt}|p{36pt}|p{36pt}|p{36pt}|p{36pt}|}
		\hline
		Dim.&
		Params.&
		$\hat{p}_{\scriptscriptstyle \text{AS}}$&
		$\hat{p}_{\scriptscriptstyle \text{IN}}$&
		$\hat{p}_{\scriptscriptstyle \text{IV}}$&
		$\hat{p}_{\scriptscriptstyle \text{MIV}}$  \\
		\hline
		\hline
		1&
		2& 
		0.892&
		0.894&
		0.926&
		0.926\\
		\hline
		2&
		8& 
		0.892&
		0.894&
		0.954&
		0.946\\
		\hline
		3&
		18&
		0.89&
		0.888&
		0.982&
		0.978\\
		\hline
		4&
		32&
		0.902&
		0.89&
		0.996&
		0.996\\
		\hline
	\end{tabular}
    \vspace{-2mm} 
\end{table}
\begin{table}[t]
	\caption{Comparison of empirical probabilities of covering the true parameter for 
	mixture of two Gaussian noises, $n=500$, $s = 500$.}
	\vspace{2mm}
	\label{table:comparsion_ls_mv_p_bimodnormnoise}
	\setlength{\tabcolsep}{3pt}
	\centering
	\begin{tabular}{|p{36pt}|p{36pt}|p{36pt}|p{36pt}|p{36pt}|p{36pt}|}
		\hline
		Dim.&
		Params.&
		$\hat{p}_{\scriptscriptstyle \text{AS}}$&
		$\hat{p}_{\scriptscriptstyle \text{IN}}$&
		$\hat{p}_{\scriptscriptstyle \text{IV}}$&
		$\hat{p}_{\scriptscriptstyle \text{MIV}}$  \\
		\hline
		\hline
		1&
		2& 
		0.9&
		0.9&
		0.932&
		0.932\\
		\hline
		2&
		8& 
		0.876&
		0.9&
		0.95&
		0.958\\
		\hline
		3&
		18&
		0.852&
		0.908&
		0.984&
		0.98\\
		\hline
		4&
		32&
		0.847&
		0.898&
		0.982&
		0.982\\
		\hline
	\end{tabular}
    \vspace{-2mm} 
\end{table}
\begin{table}[t]
	\caption{Comparison of empirical coverage probabilities for bimodal mixture of two non-stationary Laplacian noises, $n=500$, $s = 500$.}
	\vspace{2mm}
	\label{table:comparsion_ls_mv_p_nonstatnoise}
	\setlength{\tabcolsep}{3pt}
	\centering
	\begin{tabular}{|p{36pt}|p{36pt}|p{36pt}|p{36pt}|p{36pt}|p{36pt}|}
		\hline
		Dim.&
		Params.&
		$\hat{p}_{\scriptscriptstyle \text{AS}}$&
		$\hat{p}_{\scriptscriptstyle \text{IN}}$&
		$\hat{p}_{\scriptscriptstyle \text{IV}}$&
		$\hat{p}_{\scriptscriptstyle \text{MIV}}$  \\
		\hline
		\hline
		1&
		2& 
		0.888&
		0.896&
		0.928&
		0.928\\
		\hline
		2&
		8& 
		0.872&
		0.906&
		0.942&
		0.96\\
		\hline
		3&
		18&
		0.846&
		0.908&
		0.968&
		0.976\\
		\hline
		4&
		32&
		0.834&
		0.898&
		0.976&
		0.982\\
		\hline
	\end{tabular}
    \vspace{-2mm}
\end{table}

In the case, where the process noise $\{w_k\}$ is a sequence of i.i.d.\ Gaussian random vectors, the asymptotic confidence region gives accurate estimation. SPS builds confidence regions under milder conditions, even when the noise is non-stationary. Simulation experiments were {\newtext performed}, where $\{w_k\}$ is a sequence of i.i.d. bimodal mixture of two Gaussians $w_k = \frac{1}{2}Z_1 + \frac{1}{2}Z_2, Z_1 \sim \mathcal{N}(1, \sigma_wI), Z_2 \sim \mathcal{N}(-1, \sigma_wI)$, and where $\{w_k\}$ is a sequence of time dependent mixture of two Laplacians $w_k = \frac{1}{2}Z_3 + \frac{1}{2}Z_4, Z_3 \sim \mathcal{L}(\frac{5(k+1)}{n}, (\frac{(k+1)}{n} + \sigma_w)I), Z_4 \sim \mathcal{L}(\frac{-5(k+1)}{n}, (\frac{(k+1)}{n} + \sigma_w)I)$, with $\sigma_w = 1$.

The results of these experiments are presented in Tables \ref{table:comparsion_ls_mv_p_bimodnormnoise} and \ref{table:comparsion_ls_mv_p_nonstatnoise}. In both experiments the empirical coverage probabilities of the asymptotic confidence regions are less than $p = 0.9$ by a significant margin. In the bimodal Gaussian case, this deviation is due to the insufficient sample size, while in the bimodal Laplacian noise case, it is also due to the heavier-tail and the non-stationarity of the noise. 

On the other hand, the MIV-SPS Indicator algorithm constructs proper confidence regions in both cases, and the ellipsoidal outer approximations give similar coverage probabilities in the IV-SPS and MIV-SPS cases. The ellipsoidal outer approximations introduce some conservatism, but compared to the confidence regions based on the asymptotic theory, they provide better results when a non-asymptotic
lower bound on the probability is required.
\subsection{Exploration vs Exploitation}
\label{sec:exp_exploit}
The signal $\{r_k\}$ can be seen as an exploration noise affecting the controller instead of a reference signal. In this case the parameter $\varepsilon$ controls the exploitation-exploration trade-off. The system operates in an open-loop setting when $\varepsilon = 0$ and without exploration in a closed-loop when $\varepsilon = 1$. In our experimental setting, given in \eqref{exp_setting}, the choice of exploitation rate affects the size of the EOA of the confidence region in a way that with more exploitation the outer approximation of the confidence region gets more conservative. This phenomenon is due to the construction of our instrumental variables detailed in Section \ref{sec:exp_setup}. The instrumental variables are created by generating a noiseless state sequence using the exploration noise $\{r_k\}$. With less exploration, the correlation between the instrumental variable sequence and the state sequence is decreasing, thus the IV estimate gets less accurate, therefore the EOA gets more conservative. In a real-world settings, instrumental variables can often be created from signals that are observable, but not controllable. In such cases no conservatism would be introduced with more exploitation. 

The choice of the controller $K$ also influences the conservatism of the EOA. We have investigated the empirical coverage probabilities of MIV-SPS EOAs {\newtext for different $q$ and $v$ instances} using the direct and indirect identification approaches. The results are illustrated in Figure \ref{fig:exploitation_dir} for the direct case and in Figure \ref{fig:exploitation_indir} for the indirect case. The results are based on $s = 500$ Monte Carlo simulations for each $\varepsilon$ on a 2-dimensional LSS system with sample size $n=500$. Standard normal process noise is applied.
\begin{figure}[!t]
	\centerline{\includegraphics[width=\columnwidth]{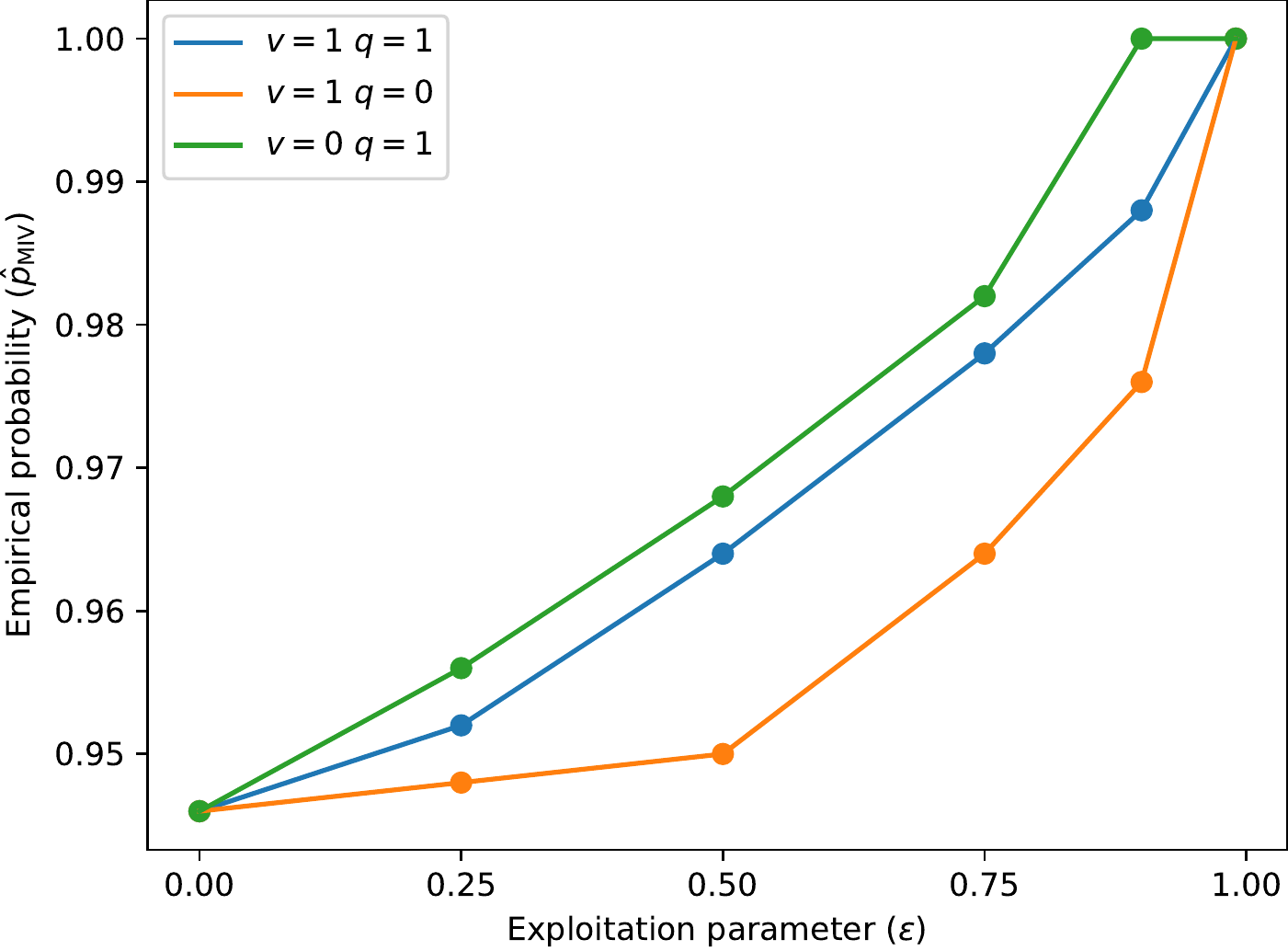}}
	\caption{Empirical coverage probability of MIV-SPS EOA as a function of the exploitation parameter for different feedback setups, using {\em direct} identification, for a 2-dimensional LSS, $n=500$, $s=500$.}
	\label{fig:exploitation_dir}
    \vspace{2mm}
\end{figure}
\begin{figure}[!t]
	\centerline{\includegraphics[width=\columnwidth]{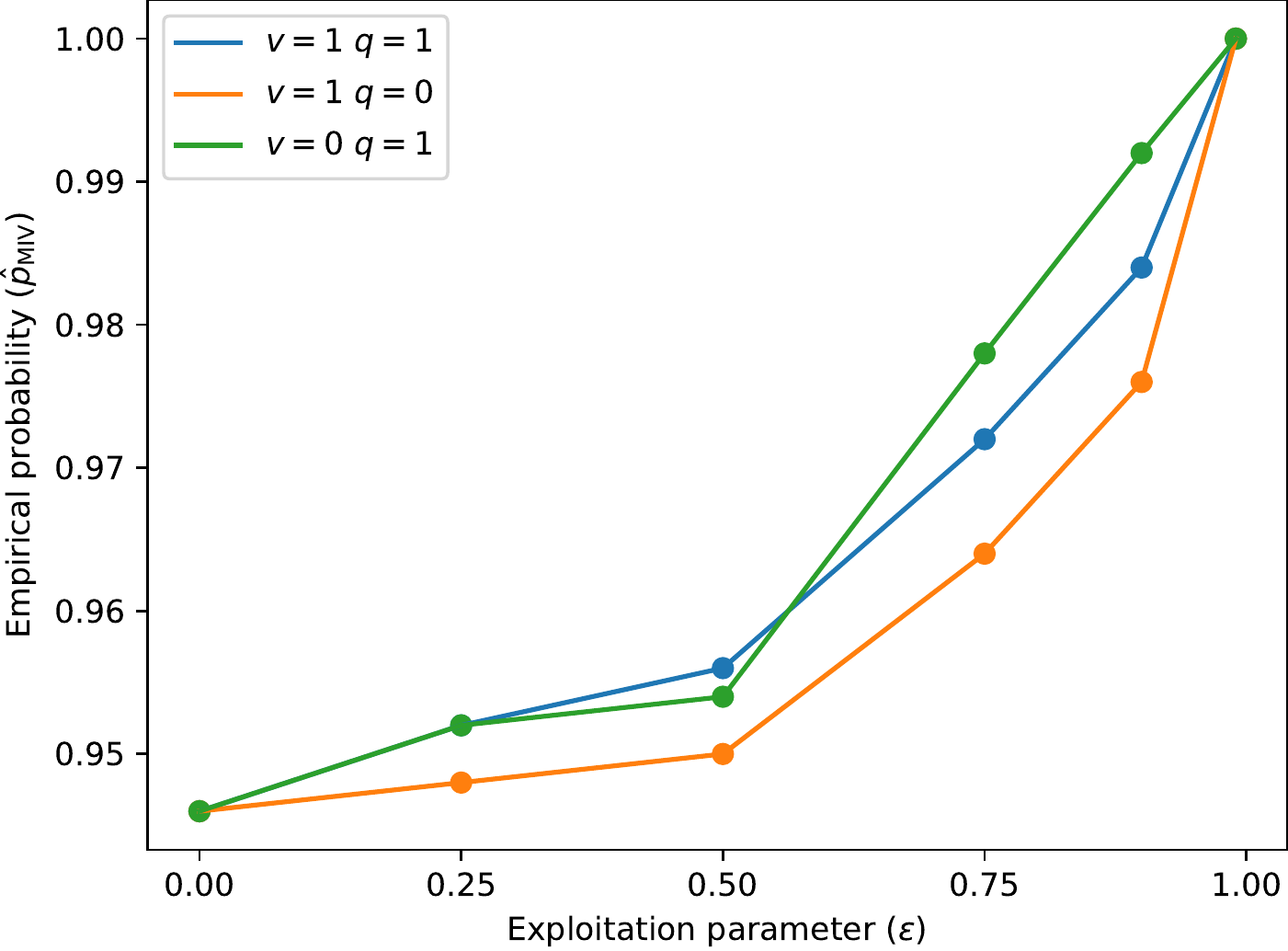}}
	\caption{Empirical coverage probability of MIV-SPS EOA as a function of the exploitation parameter for different feedback setups, using {\em indirect} identification, for a 2-dimensional LSS, $n=500$, $s=500$.}
	\label{fig:exploitation_indir}
\end{figure}
\subsection{Sample Complexity}
In the last experiments, the sample complexity of the MIV-SPS EOA method is investigated. As the sample size $n$ increases, the 
EOA becomes less conservative, hence it provides more accurate confidence regions. Simulation results are presented for different exploitation parameter $\varepsilon$ values. $500$ Monte Carlo simulations were run for each sample size 
on a 2-dimensional LSS system, using standard normal noise and with direct identification. The results are illustrated in Figure \ref{fig:samplecomplex}. It can be observed that with more exploitation (as $\varepsilon$ increases), the empirical coverage probability of EOA converges slower to the target confidence, due to the inaccuracy of the IV estimate.

\begin{figure}[!t]
	\centerline{\includegraphics[width=\columnwidth]{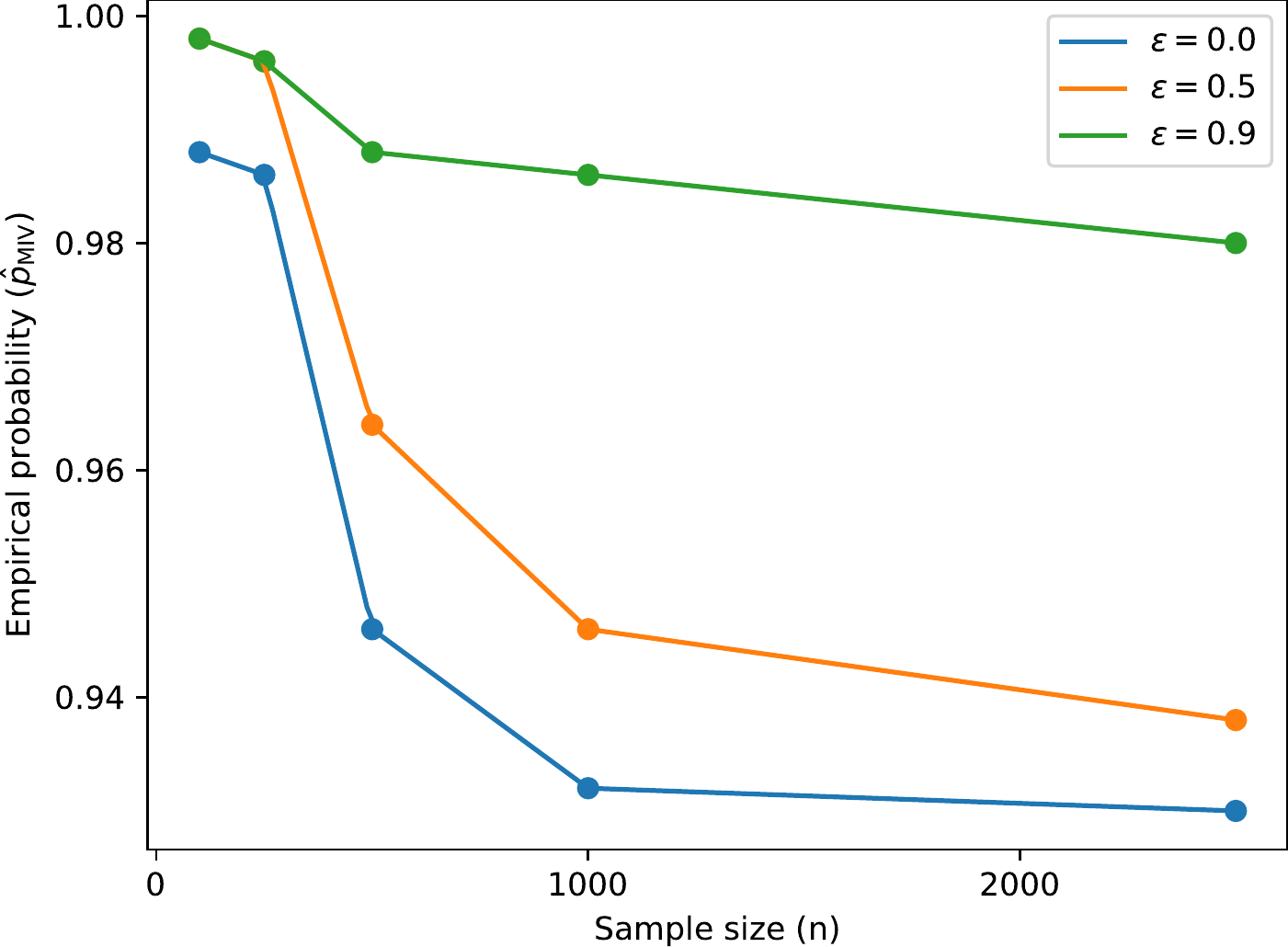}}
	\caption{Empirical coverage probability of MIV-SPS EOA as a function of the sample size $n$ in an open-loop ($\varepsilon = 0$) setting, using direct identification, for a 2-dimensional LSS, $s=500$.}
	\label{fig:samplecomplex}
\end{figure}

\vspace{-1mm}
\section{Conclusions}\label{conclusion}
The paper introduced a {\em matrix variate regression} and {\em instrumental variables} based generalization of the {\em Sign-Perturbed Sums} (SPS) identification method, called MIV-SPS.
It was illustrated on {\em closed-loop linear state-space} (LSS) models.
The MIV-SPS method constructs {\em non-asymptotic}, {\em distribution-free} confidence regions around the instrumental variable estimate under mild 
statistical 
assumptions on the process noise. The {\em exact confidence} and the {\em strong consistency} of 
MIV-SPS regions were proven.

A computationally efficient {\em ellipsoidal outer approximation} (EOA) method was suggested, as well, which is a compact over-bound of the MIV-SPS confidence region. Simulation experiments demonstrated that, for LSS models, our proposed method computes the EOA more efficiently than the scalar IV-SPS, developed earlier, while keeping a similar size. Experiments comparing MIV-SPS and the confidence ellipsoids based on the asymptotic theory were also concluded, showing the superiority of MIV-SPS, e.g., for {\em heavy-tailed} and {\em non-stationary} noise sequences.

Extending MIV-SPS to be able to handle {\em partially observable} systems and combining the ideas with {\em robust control} approaches are subject of future research.

\vspace{-1mm}
\section*{Acknowledgments}\label{acknowledgments}
This research was supported by the European Union within the framework of the National Laboratory for Autonomous Systems, RRF-2.3.1-21-2022-00002; and by the TKP2021-NKTA-01 grant of the National Research, Development and Innovation Office (NRDIO), Hungary.

\vspace{-1mm}
\bibliography{iv-mimo-clss-sps}
\appendix

\section{Semidefinite Programming Formulation of the Outer Approximation}\label{app:cvxformulation}
The optimization problem
\begin{equation}
	\begin{aligned}
		\max \quad & \norm{Z}^2\frob\\
		\textrm{s.t.} \quad &\Tr(Z\tr A_iZ + Z\tr B_i + B_i\tr Z + C_i) \leq 0, \\
	\end{aligned}
\end{equation}
has the equivalent form
\begin{equation}\label{equ:optprob2}
	\begin{aligned}
		\min \quad & -\Tr(Z\tr Z)\\
		\textrm{s.t.} \quad &\Tr(Z\tr A_iZ + Z\tr B_i + B_i\tr Z + C_i) \leq 0. \\
	\end{aligned}
\end{equation}
The Lagrangian of the problem \eqref{equ:optprob2} can be written as
\begin{equation}
	\begin{aligned}
		&L(Z, \lambda)\, \defeq\, \\&-\Tr(Z\tr Z) + \lambda\Tr(Z\tr A_iZ + Z\tr B_i + B_i\tr Z + C_i)\\& = \Tr(Z\tr(-I+\lambda A_i)Z +\lambda Z\tr B_i + \lambda B_i\tr Z + \lambda C_i).
	\end{aligned}
\end{equation}
The Lagrange dual function can be formulated as
\begin{equation}
	\begin{aligned}
		g(\lambda) \,\defeq \, \inf_Z L(Z, \lambda).
	\end{aligned}
\end{equation}
If $L(Z, \lambda)$ takes its minimum with respect to $Z$, then
\begin{equation}
	\begin{aligned}
		\nabla_Z L(Z, \lambda) \,=\, (-I+\lambda A_i)Z + \lambda B_i = 0.
	\end{aligned}
\end{equation}
From the above equation it can be seen that $L(Z, \lambda)$ takes its minimum value, where $Z = -(-I+\lambda A_i)^{-1}\lambda B_i$, thus the Lagrange dual function can be written as
\begin{equation}
	\begin{aligned}
		g(\lambda) \,=\, &\Tr(-\lambda^2B_i\tr (-I+\lambda A_i)^{-1}B_i + \lambda C_i),
	\end{aligned}
\end{equation}
therefore, the dual problem can be formulated as
\begin{equation}\label{equ:dualprob}
	\begin{aligned}
		\max \quad & \Tr(-\lambda^2B_i\tr (-I+\lambda A_i)^{-1}B_i + \lambda C_i)\\
		\textrm{s.t.} \quad&\lambda \geq 0.
	\end{aligned}
\end{equation}
The optimization problem \eqref{equ:dualprob} can be reformulated as
\begin{equation}
	\begin{aligned}
		\min \quad & \Tr(\Gamma - \lambda C_i)\\
		\textrm{s.t.} \quad&\lambda \geq 0 \\
		\quad&\Gamma \succeq \lambda^2B_i\tr (-I+\lambda A_i)^{-1}B_i,
	\end{aligned}
\end{equation}
where $\Gamma$ is a symmetric matrix and relation ``$\succeq$'' 
denotes the Loewner partial ordering of semidefinite matrices. 

Finally, the constraint $\Gamma \succeq \lambda^2B_i\tr (-I+\lambda A_i)^{-1}B_i$ can be reformulated by resorting to the Schur complement,
\begin{equation}
	\begin{bmatrix}
		-I + \lambda A_i & \lambda B_i \\
		\lambda B_i\tr & \Gamma\\
	\end{bmatrix} \succeq 0,
\end{equation}
which leads to the convex optimization problem \eqref{cvxprog}.
\section{Proof of Theorem \ref{theorem1}\\ (Exact Confidence)}\label{app:proof_theorem1}
\begin{definition}\label{definition1}
	Let $Z_1,\dots,Z_k$ be a finite collection of
	random variables and $\succ$ a strict total order. If for all
	permutations $i_1,\dots,i_k$ of indices $1,\dots,k$ we have
	\vspace{-1mm}
	\begin{equation}
		\mathbb{P}(Z_{i_k} \succ Z_{i_{k-1}} \succ \dots \succ Z_{i_1}) = \frac{1}{k!},
    \vspace{-1mm}		
	\end{equation}
	then we call $\{Z_i\}$ uniformly ordered w.r.t. order $\succ$.
\end{definition}
\begin{lemma}\label{lemma1}
	Let $\alpha,\beta_1,\dots,\beta_k$ be i.i.d. random signs (random variables taking $\pm 1$ with probability $\nicefrac{1}{2}$ each), then the variables $\alpha,\alpha \cdot \beta_1,\dots,\alpha \cdot \beta_k$ are i.i.d. random
	signs.
\end{lemma}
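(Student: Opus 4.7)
The plan is to verify the conclusion directly from the definition of independence by computing joint probabilities on all atoms of the discrete distribution. Since every variable involved takes values in $\{-1,+1\}$, the whole tuple $(\alpha,\alpha\beta_1,\dots,\alpha\beta_k)$ is supported on the $2^{k+1}$ sign patterns of $\{-1,+1\}^{k+1}$, and i.i.d.\ Rademacher-ness is equivalent to the joint pmf being identically $2^{-(k+1)}$ on these atoms.

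First I would check the marginals: for each $i$,
\begin{equation*}
\mathbb{P}(\alpha\beta_i = 1) \,=\, \mathbb{P}(\alpha = \beta_i) \,=\, \tfrac{1}{2},
\end{equation*}
using that $\alpha$ and $\beta_i$ are independent Rademacher variables. Hence every $\alpha\beta_i$ is a Rademacher variable, and $\alpha$ itself is by hypothesis.

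Next, for joint independence, I would fix an arbitrary sign pattern $(s_0,s_1,\dots,s_k)\in\{-1,+1\}^{k+1}$ and rewrite the event $\{\alpha=s_0,\,\alpha\beta_1=s_1,\dots,\alpha\beta_k=s_k\}$. Since $s_0^2 = 1$, conditional on $\alpha=s_0$ the constraint $\alpha\beta_i = s_i$ is the same as $\beta_i = s_0 s_i$. Therefore
\begin{equation*}
\begin{aligned}
&\mathbb{P}\big(\alpha=s_0,\,\alpha\beta_1=s_1,\dots,\alpha\beta_k=s_k\big) \\
&=\mathbb{P}\big(\alpha=s_0,\,\beta_1 = s_0 s_1,\dots,\beta_k = s_0 s_k\big) \,=\, \Big(\tfrac{1}{2}\Big)^{k+1},
\end{aligned}
\end{equation*}
where the last step uses the mutual independence of $\alpha,\beta_1,\dots,\beta_k$ together with each being Rademacher. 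Since the joint pmf coincides with $2^{-(k+1)}$ on every atom, it factorizes as the product of Rademacher marginals, which by definition is i.i.d.\ Rademacher.

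There is no real obstacle here; the argument is elementary bookkeeping. The only point worth being explicit about is that proving the pmf equals $2^{-(k+1)}$ on all $2^{k+1}$ atoms is sufficient, because a probability measure on a finite set is uniquely determined by its values on singletons, and the product measure of $k+1$ uniform $\{-1,+1\}$ laws is the unique measure taking the value $2^{-(k+1)}$ on each atom.
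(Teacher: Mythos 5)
Your proof is correct: the bijection $(\alpha,\beta_1,\dots,\beta_k)\mapsto(\alpha,\alpha\beta_1,\dots,\alpha\beta_k)$ on sign patterns gives a uniform joint pmf of $2^{-(k+1)}$ on every atom, which is exactly the i.i.d.\ Rademacher law. The paper does not prove this lemma inline but defers to \cite{Csaji2015}, where the argument is the same elementary atom-counting computation, so your proposal matches the intended proof.
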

\begin{lemma}\label{lemma2}
	Let $X$ and $Y$ be two independent, $\mathbb{R}^d$-valued and $\mathbb{R}^k$-valued random vectors, respectively. Let us consider a (measurable) function $g : \mathbb{R}^d \times \mathbb{R}^k \rightarrow \mathbb{R}$ and a (measurable) set $A \subseteq \mathbb{R}$. If we have $\mathbb{P}(g(x, Y) \in A) = p$, for all (constant) $x \in \mathbb{R}^d$, then we also have $\mathbb{P}(g(X, Y) \in A) = p$.
\end{lemma}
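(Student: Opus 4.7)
The plan is to reduce this to a straightforward application of Fubini's theorem (equivalently, the tower property of conditional expectation) using the product-measure structure given by the independence of $X$ and $Y$.

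First, I would define the auxiliary function $h : \mathbb{R}^d \to [0,1]$ by
\begin{equation}
h(x) \,\defeq\, \mathbb{P}(g(x, Y) \in A) \,=\, \mathbb{E}\big[\mathbb{I}_A(g(x, Y))\big].
\end{equation}
The measurability of $h$ follows from Tonelli's theorem applied to the nonnegative measurable integrand $\mathbb{I}_A \circ g$, so the hypothesis ``$\mathbb{P}(g(x,Y) \in A) = p$ for all $x$'' is a well-posed statement and gives $h \equiv p$ on $\mathbb{R}^d$.

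Next I would express the target probability as an integral against the joint distribution $P_{X,Y}$ of $(X,Y)$. Since $X$ and $Y$ are independent, $P_{X,Y} = P_X \otimes P_Y$, so by Fubini's theorem
\begin{equation}
\begin{aligned}
\mathbb{P}\big(g(X,Y) \in A\big) &= \int_{\mathbb{R}^d \times \mathbb{R}^k} \mathbb{I}_A(g(x,y))\, \mathrm{d}(P_X \otimes P_Y)(x,y) \\
&= \int_{\mathbb{R}^d} \bigg( \int_{\mathbb{R}^k} \mathbb{I}_A(g(x,y))\, \mathrm{d}P_Y(y) \bigg)\, \mathrm{d}P_X(x) \\
&= \int_{\mathbb{R}^d} h(x)\, \mathrm{d}P_X(x) \,=\, \int_{\mathbb{R}^d} p\, \mathrm{d}P_X(x) \,=\, p,
\end{aligned}
\end{equation}
which is the claim. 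An equivalent route is via the tower property: by independence of $X$ and $Y$, $\mathbb{E}[\mathbb{I}_A(g(X,Y)) \mid X] = h(X)$ almost surely, and then taking expectations gives $\mathbb{P}(g(X,Y) \in A) = \mathbb{E}[h(X)] = p$.

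The only subtlety, and the single place where care is genuinely required, is justifying the conditional-expectation identity $\mathbb{E}[\mathbb{I}_A(g(X,Y)) \mid X] = h(X)$ (or the equivalent disintegration step in the Fubini display). This is the standard ``freezing lemma'' for functions of independent variables and can be argued by a monotone-class argument starting from product indicators, where it is immediate from the definition of independence. Beyond that, everything is routine.
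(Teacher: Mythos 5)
Your proof is correct: the Fubini/tower-property argument with the auxiliary function $h(x) = \mathbb{P}(g(x,Y)\in A)$ is exactly the standard way this ``freezing lemma'' is established, and it is the same argument used in the reference \cite{Csaji2015} to which the paper defers for this proof (the paper itself does not reproduce it). Your remarks on the measurability of $h$ and on justifying the disintegration step are appropriate and do not hide any gap.
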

\begin{lemma}\label{lemma3}
	Let $Z_1,\dots,Z_k$ be real-valued, i.i.d. random variables. Then, they are uniformly ordered w.r.t. $\succ_{\pi}$.
\end{lemma}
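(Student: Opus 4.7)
The plan is to exploit the exchangeability of i.i.d.\ random variables together with the symmetry (uniformity on $S_k$ and independence from the data) of the random permutation $\pi$ used for tie-breaking. Since $\succ_\pi$ is a strict total order, the $k!$ events
\[
A_\sigma \defeq \{\, Z_{\sigma(k)} \succ_\pi Z_{\sigma(k-1)} \succ_\pi \dots \succ_\pi Z_{\sigma(1)} \,\},\quad \sigma \in S_k,
\]
form a partition of the sample space. Hence it suffices to prove $\mathbb{P}(A_\sigma) = \mathbb{P}(A_{\mathrm{id}})$ for every $\sigma \in S_k$: then $\mathbb{P}(A_\sigma) = 1/k!$ follows automatically.

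First, I would reduce the problem to a joint distributional identity. Because $Z_1, \dots, Z_k$ are i.i.d., the vector $(Z_{\sigma(1)}, \dots, Z_{\sigma(k)})$ has the same law as $(Z_1, \dots, Z_k)$ (exchangeability). Moreover, by Step 4 of Algorithm \ref{alg:sps_init}, $\pi$ is drawn independently of the data and is uniform on $S_k$, so $\pi \circ \sigma$ is also uniform on $S_k$ and remains independent of $(Z_1, \dots, Z_k)$. Combining these,
\[
(Z_{\sigma(1)}, \dots, Z_{\sigma(k)},\, \pi \circ \sigma) \,\overset{\scriptscriptstyle d}{=}\, (Z_1, \dots, Z_k,\, \pi).
\]

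Next, I would rewrite $A_\sigma$ in terms of the relabelled variables. Setting $Z'_i \defeq Z_{\sigma(i)}$ and $\pi' \defeq \pi \circ \sigma$ and unfolding the definition of $\succ_\pi$, one sees that $Z_{\sigma(i)} \succ_\pi Z_{\sigma(j)}$ holds if and only if $Z'_i > Z'_j$, or $Z'_i = Z'_j$ together with $\pi'(i) > \pi'(j)$, which is precisely $Z'_i \succ_{\pi'} Z'_j$. Consequently,
\[
A_\sigma \,=\, \{\, Z'_k \succ_{\pi'} Z'_{k-1} \succ_{\pi'} \dots \succ_{\pi'} Z'_1 \,\},
\]
and the joint distributional equality above yields $\mathbb{P}(A_\sigma) = \mathbb{P}(A_{\mathrm{id}})$, which is what we needed.

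The step that requires some care is the handling of ties: a pure appeal to exchangeability of the $Z_i$'s does not suffice on the event $\{Z_i = Z_j\}$, because there the order depends on the \emph{labels} read off by $\pi$ rather than on the values themselves. The symmetric tie-breaking mechanism is tailored precisely to this: one must apply $\sigma$ \emph{simultaneously} to the coordinate indexing of $Z$ and to the labels used by $\pi$ (i.e.\ replace $\pi$ by $\pi \circ \sigma$) to preserve the joint law. In particular, no continuity assumption on the distribution of $Z_i$ is needed, which is important for the later use of this lemma in the proof of Theorem \ref{theorem1}.
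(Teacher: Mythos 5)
Your argument is correct and is essentially the standard one: the paper itself does not spell out a proof of Lemma~\ref{lemma3} but defers to \cite{Csaji2015}, where the same exchangeability argument is used — the order $\succ_\pi$ is the lexicographic order on the pairs $(Z_i,\pi(i))$, which are exchangeable because the $Z_i$ are i.i.d.\ and $\pi$ is uniform and independent of them, so all $k!$ arrangements are equiprobable. Your explicit handling of ties via the simultaneous relabelling $\pi \mapsto \pi\circ\sigma$ is exactly the point that makes the random tie-breaking work, and no further assumptions (e.g.\ continuity of the law of $Z_i$) are needed.
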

\noindent The proofs of Lemmas \ref{lemma1}, \ref{lemma2} and \ref{lemma3} can be found in \cite{Csaji2015}.
\begin{lemma}\label{lemma4}
    Let $w$ be a random vector with $w\; {\buildrel {\scriptscriptstyle d} \over =}\,-w$, that is $w$ is symmetrically distributed about zero. Then, there exist a random sign $s$ and a random vector $v$ with the properties: $s$ is independent of $v$ and $w = s\cdot v$.
\end{lemma}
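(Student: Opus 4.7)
The plan is to use the standard symmetrization trick: enlarge the probability space to introduce an auxiliary random sign and then transfer the symmetry of $w$ into the desired independence. Concretely, I would first adjoin to the existing probability space a Rademacher variable $s$ with $\mathbb{P}(s=1)=\mathbb{P}(s=-1)=\nicefrac{1}{2}$, chosen independent of $w$ (this is always possible on a suitable product extension). Then I would define $v \defeq s \cdot w$.

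The identity $w = s\cdot v$ follows immediately from $s^2 = 1$, since $s\cdot v = s\cdot s\cdot w = w$. So the content of the lemma is the independence of $s$ and $v$. My strategy is to verify this directly via the joint distribution. For an arbitrary measurable set $A \subseteq \mathbb{R}^{d_x}$, I would split on the value of $s$ and use independence of $s$ and $w$:
\begin{equation*}
\mathbb{P}(v \in A,\, s = 1) \,=\, \mathbb{P}(w \in A,\, s=1) \,=\, \tfrac{1}{2}\,\mathbb{P}(w \in A),
\end{equation*}
\begin{equation*}
\mathbb{P}(v \in A,\, s = -1) \,=\, \mathbb{P}(-w \in A,\, s=-1) \,=\, \tfrac{1}{2}\,\mathbb{P}(-w \in A).
\end{equation*}
Here the symmetry hypothesis $w \stackrel{\scriptscriptstyle d}{=} -w$ is used in a crucial way: it makes $\mathbb{P}(-w \in A) = \mathbb{P}(w \in A)$, so both joint probabilities equal $\tfrac{1}{2}\mathbb{P}(w \in A)$. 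Summing over the two values of $s$ yields $\mathbb{P}(v \in A) = \mathbb{P}(w \in A)$, which in turn gives $\mathbb{P}(v \in A, s = \epsilon) = \mathbb{P}(v \in A)\,\mathbb{P}(s = \epsilon)$ for $\epsilon \in \{-1,+1\}$. Since the events $\{s=\pm 1\}$ generate the $\sigma$-algebra of $s$, this is exactly independence of $s$ and $v$.

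The only genuinely delicate point is the appeal to symmetry, which is what allows the two conditional laws of $v$ given $s=\pm 1$ to coincide; everything else is bookkeeping. I do not foresee a real obstacle beyond making explicit that the enlargement of the probability space is permissible, which is standard. As a sanity check, $v$ inherits the distribution of $w$, so in particular $v$ is also symmetric about zero, which is consistent with the factorization $w = s\cdot v$ with $s$ an independent Rademacher sign.
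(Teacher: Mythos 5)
Your proposal is correct and follows essentially the same route as the paper's proof: both construct an auxiliary Rademacher sign $s$ independent of $w$, set $v \defeq s\cdot w$, and verify independence by computing the joint probabilities $\mathbb{P}(v \in A,\, s = \pm 1)$ and invoking the symmetry $w \overset{\scriptscriptstyle d}{=} -w$ to equate them. Your explicit remark about enlarging the probability space and the observation that $v$ inherits the law of $w$ are welcome clarifications, but the argument is the same.
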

\begin{proof}[Proof of Lemma \ref{lemma4}]
    Let $s$ be a random sign (that is: a Rademacher random variable) which is independent of $w$, and let $v \defeq s\cdot w$. Then, it holds for every event $A$, that
    \vspace{-1mm}
    \begin{equation}
	\begin{aligned}
    \mathbb{P}(s = 1 \wedge v \in A) & \,=\, \mathbb{P}(s = 1 \wedge s\cdot w \in A)\\
                                     & \,=\,\mathbb{P}(s = 1 \wedge w \in A)\\
                                     & \,=\, \mathbb{P}(s = 1)\cdot\mathbb{P}(w \in A)\\
                                     & \,=\, \mathbb{P}(s = 1)\cdot\mathbb{P}(s\cdot w \in A)\\
                                     & \,=\, \mathbb{P}(s = 1)\cdot\mathbb{P}(v \in A).
    \end{aligned}
    \end{equation} 
    \begin{equation}
	\begin{aligned}
    \mathbb{P}(s = -1 \wedge v \in A) & \,=\, \mathbb{P}(s = -1 \wedge s\cdot w \in A)\\
                                     & \,=\,\mathbb{P}(s = -1 \wedge -w \in A)\\
                                     & \,=\, \mathbb{P}(s = 1)\cdot\mathbb{P}(-w \in A)\\
                                     & \,=\, \mathbb{P}(s = 1)\cdot\mathbb{P}(s\cdot w \in A)\\
                                     & \,=\, \mathbb{P}(s = 1)\cdot\mathbb{P}(v \in A). 	    
    \vspace{-1mm}                                     
	\end{aligned}
    \end{equation} 	
	Therefore, $s$ and $v$ are independent and $w = s\cdot v$.
\end{proof}
\begin{proof}[Proof of Theorem \ref{theorem1}]
	By definition, the true parameter, $\Theta^*$, is included in the confidence region if $\mathcal{R}(\Theta^*) \leq m-q$. In this case $\norm{S_0(\Theta^*)}\frob^2$ is ranked $1,\dots,m-q$ according to the ascending order (w.r.t. $\succ_{\pi}$) of variables $\{ \norm{S_i(\Theta^*)}\frob^2\}$. 
	
	We will prove that $\{\norm{S_i(\Theta^*)}\frob^2\}$ are uniformly ordered, thus the probability that $\norm{S_0(\Theta^*)}\frob^2$ takes any position in the ordering equals $1/m$, therefore the probability that its rank is at most $m-q$ is $1-q/m$. 
	Throughout the proof we will work with a {\em fixed realization} of the instrumental variables, the general case follows from  Lemma \ref{lemma2}.

	For $\Theta = \Theta^*$, the $S_i(\cdot)$ functions can be formulated as
	\vspace{-1mm}
	\begin{equation}\label{equ:Si_proof}
		S_i(\Theta^*) \,=\, \tfrac{1}{n}P_n^{-\frac{1}{2}}\Psi\tr\Lambda_iW,
	\vspace{-1mm}		
	\end{equation}
	$\forall\, i \in \{0,\dots,m-1\}$, where $\Lambda_0 = I \in \R^{n \times n}$.
	
	From \eqref{equ:Si_proof} it can be seen that functions $\{S_i(\cdot)\}$ depend on the perturbed noise matrices, $\{\Lambda_iW\}$, via the same function for all $i$, which is denoted by $S(\Lambda_iW) \defeq S_i(\Theta^*)$.
	
	By using the result of Lemma \ref{lemma4}, the noise vector sequence $\{w_k\}$ can be decomposed as $w_k = s_k\tilde{w}_k$, where $s_k$ is a Rademacher random variable independent of $\tilde{w}_k$. Let $\gamma_{i,k} \defeq \alpha_{i,k}s_k$, where $\alpha_{0,k} = 1$, $\forall k,\forall i \in {0,\dots,m-1}$. Lemma \ref{lemma1} can be used to show that $\gamma_{i,k}$ are i.i.d. random signs, since $\{\alpha_{i,k}\}$ are random signs and $\tilde{s}_k$ is a Rademacher random variable. As a consequence of Lemma \ref{lemma4}, the sequence $\{\gamma_{i,k}\}$ are independent of $\{\tilde{w}_k\}$.
	
	By fixing a realization of $\{\tilde{w}_k\}$, called $\{v_k\}$, real-valued variables $\{Z_i\}$ can be defined as 
	\vspace{-1mm}
	\begin{equation}
		Z_i \,\defeq\, \norm{S(\Gamma_iM)}\frob^2\!,
    \vspace{-1mm}		
	\end{equation}
	where
	\vspace{-1mm}
	\begin{equation}
		\Gamma_i \defeq \begin{bmatrix}
			\gamma_{i,1} & & \\
			& \ddots & \\
			& & \gamma_{i,n}
		\end{bmatrix}\!,\qquad
		M \defeq \begin{bmatrix}
			v_0\tr \\
			v_1\tr \\
			\vdots \\
			v_{n-1}\tr
		\end{bmatrix}\!.
    \vspace{-1mm}  
	\end{equation}
	
	Random variables $\{Z_i\}$ are i.i.d., and by Lemma \ref{lemma3}, it follows that they are uniformly ordered w.r.t.\ $\succ_{\pi}$.
	
	In order to relax the assumption of a fixed realization of $\{\tilde{w}_k\}$, Lemma \ref{lemma2} can be applied. By doing so, it follows that the $\{\norm{S_i(\Theta^*)}\frob^2\}$ variables are uniformly ordered (unconditionally), which completes the proof of the theorem.
\end{proof}
\vspace{-2mm}
\section{Proof of Theorem \ref{theorem2}\\
(Strong Consistency)}\label{app:proof_theorem2}
\begin{proof}[Proof of Theorem \ref{theorem2}]
    The proof consists of two parts. In the first part we are going to prove that for any
    $\Theta' \neq \Theta^*$: 
    \vspace{-1mm}    
    \begin{equation}
    \norm{S_0(\Theta')}\frob^2\, \xrightarrow{a.s.}\, \norm{P^{-\frac{1}{2}}V\tilde{\Theta}}\frob^2 \,>\, 0,
    \vspace{-1mm}
    \end{equation}
    where $\tilde{\Theta} \defeq (\Theta^* - \Theta')$ and matrices $P$ and $V$ are defined in Assumptions \ref{assu:psd_p} and \ref{assu:inv_v}. On the other hand, for $i \neq 0$, we have
    \vspace{-1mm}
    \begin{equation}
    \norm{S_i(\Theta')}\frob^2\, \xrightarrow{a.s.} 0,\,
    \vspace{-1mm}
    \end{equation}
as {\newtext $n \to \infty$}. Consequently, as $n$ grows, the rank $\mathcal{R}(\Theta')$ of $\norm{S_0(\Theta')}\frob^2$ will be eventually equal to $m$, therefore $\Theta'$ will be (a.s.) excluded from the confidence region, as {\newtext $n \to \infty$}. 

The results will be derived for a fixed realization of the instrumental variables, just like in the proof of Theorem \ref{theorem1}. Since the matrix of IVs $\Psi_n$ and the noise $W$ are independent \eqref{assu:iv1}, the obtained results hold true (almost surely) on the whole probability space, ensured by
    Lemma \ref{lemma2}. In the second part, we will prove that the confidence region converge to $\Theta^*$ uniformly, not just pointwise.

 $S_0(\Theta')$ can be formulated as
     \vspace{-1mm}
	\begin{align}
			S_0(\Theta') & = \tfrac{1}{n}P_n^{-\frac{1}{2}}\Psi_n\tr \CE_n(\Theta')
			= \tfrac{1}{n}P_n^{-\frac{1}{2}}\Psi_n\tr(Y_n-\Phi_n\Theta') \notag\\
			&= \tfrac{1}{n}P_n^{-\frac{1}{2}}\Psi_n\tr\Phi_n\tilde{\Theta} + \tfrac{1}{n}P_n^{-\frac{1}{2}}\Psi_n\tr W_n.
       \vspace{-1mm}
	\end{align}
	The two terms can be examined separately. By observing that $(\cdot)^{\frac{1}{2}}$ is a continuous matrix function and applying \ref{assu:psd_p} and \ref{assu:inv_v} the convergence of the first part follows, thus
    \vspace{-1mm}
	\begin{equation}
		\tfrac{1}{n}P_n^{-\frac{1}{2}}\Psi_n\tr\Phi_n\tilde{\Theta} = P_n^{-\frac{1}{2}}V_n\tilde{\Theta}\, \xrightarrow{a.s.}\, P^{-\frac{1}{2}}V\tilde{\Theta},
        \vspace{-1mm}
	\end{equation}
	as {\newtext $n \to \infty$}. The convergence of the second term is proved from the element-wise application of Kolmogorov’s strong
	law of large numbers (SLLN) for independent variables \cite{Shiryaev1997}. Note that $\{P_n^{-\frac{1}{2}}\} \xrightarrow{a.s.} \{P^{-\frac{1}{2}}\}$ as $n \to \infty$, thus it is enough to prove that $\tfrac{1}{n}\Psi_n\tr W_n \xrightarrow{a.s.} 0$. By applying the Cauchy-Schwarz inequality, \ref{assu:i_grr}, and \ref{assu:var_grr}, we have
	\vspace{-1mm}
	\begin{equation}\label{equ:CS-Kolmogorov}
		\begin{aligned}
			&\sum_{k=0}^{\infty}\frac{\mathbb{E}\!\left[(\psi_{k,j}w_{k,l})^2\right]}{k^2} \leq
			\sum_{k=0}^{\infty}\frac{\norm{\psi_{k}}^2}{k}\frac{\mathbb{E}\!\left[\norm{w_{k}}^2\right]}{k} \leq \\
			&\sqrt{\sum_{k=0}^{\infty}\frac{\norm{\psi_{k}}^4}{k^2}}\sqrt{\sum_{k=0}^{\infty}\frac{\mathbb{E}\!\left[\norm{w_{k}}^2\right]^2}{k^2}} < \infty.
		\end{aligned}
	\end{equation}
	Therefore the condition of the SLLN holds, thus
	\vspace{-1mm}
	\begin{equation}
		\tfrac{1}{n}P_n^{-\frac{1}{2}}\Psi_n\tr W_n \xrightarrow{a.s.} 0,\quad \text{ as }\quad n \to \infty.
	\end{equation}
	From the above results we obtain
	\vspace{-1mm}
	\begin{equation}
		\begin{aligned}
			&\norm{S_0(\Theta')}\frob^2 \xrightarrow{a.s.}
			\norm{P^{-\frac{1}{2}}V\tilde{\Theta}}\frob^2 > 0,
		\end{aligned}
	\end{equation}
	since $V$ is full rank, $P^{-\frac{1}{2}}$ is positive definite and $\Theta' \neq \Theta^*$.

	The limit of $S_i(\Theta')$ can be derived similarly,
	\vspace{-1mm}
	\begin{equation}
		\begin{aligned}
			S_i(\Theta') &\,= \,\tfrac{1}{n}P_n^{-\frac{1}{2}}\Psi_n\tr\Lambda_i(Y_n-\Phi_n\Theta') \\
			&\,=\, \tfrac{1}{n}P_n^{-\frac{1}{2}}\Psi_n\tr\Lambda_i\Phi_n(\tilde{\Theta}) + \tfrac{1}{n}P_n^{-\frac{1}{2}}\Psi_n\tr\Lambda_i W_n,
		\end{aligned}
	\end{equation}
	We will examine the asymptotic behavior of the two terms separately again. The convergence of $\tfrac{1}{n}P_n^{-\frac{1}{2}}\Psi_n\tr\Lambda_i W_n$ follows from \eqref{equ:CS-Kolmogorov}, since the variance of $\alpha_{i,k}\psi_{k,j}w_{k,l}$ equals the variance of $\psi_{k,j}w_{k,l}$, therefore
	\vspace{-1mm}
	\begin{equation}
		\tfrac{1}{n}P_n^{-\frac{1}{2}}\Psi_n\tr\Lambda_i W_n \xrightarrow{a.s.} 0,\quad \text{ as } \quad n \to \infty.
	\end{equation}
	In the first term, it holds that, $\{P_n^{-\frac{1}{2}}\} \xrightarrow{a.s.} \{P^{-\frac{1}{2}}\}$ and $\tilde{\Theta}$ is constant, thus it is enough to prove that $\tfrac{1}{n}\Psi_n\tr\Lambda_i\Phi_n$ converges almost surely to 0 element-wise. First, we fix a realization of every random variable except the random signs ($\Lambda_i$). For this realization the assumptions \ref{assu:r_grr} and \ref{assu:i_grr} hold. Then, $\{\alpha_{i,k}\psi_{k,j}\varphi_{k,l}\}$ becomes a sequence of conditionally independent random variables with conditional covariances $(\psi_{k,j}\varphi_{k,l})^2$. Using \ref{assu:r_grr} and \ref{assu:i_grr}, we get
	\vspace{-1mm}
	\begin{equation}
		\begin{aligned}
			&\sum_{k=0}^{\infty}\frac{(\psi_{k,j}\varphi_{k,l})^2}{k^2} \leq
			\sum_{k=0}^{\infty}\frac{\norm{\psi_{k}}^2}{k}\frac{\norm{\varphi_{k}}^2}{k} \leq \\
			&\sqrt{\sum_{k=0}^{\infty}\frac{\norm{\psi_{k}}^4}{k^2}}\sqrt{\sum_{k=0}^{\infty}\frac{\norm{\varphi_{k}}^4}{k^2}} < \infty, 
		\end{aligned}
	\end{equation}
	therefore, by applying Kolmogorov’s SLLN element-wise,
	\begin{equation}
		\tfrac{1}{n}P_n^{-\frac{1}{2}}\Psi_n\tr\Lambda_i\Phi_n\tilde{\Theta}\, \xrightarrow{a.s.}\, 0,
	\end{equation}
	as {\newtext $n \to \infty$} holds for (almost) any realization, therefore holds true unconditionally. From the previous derivation
	\vspace{-1mm}
	\begin{equation}
		\begin{aligned}
			&\norm{S_i(\Theta')}\frob^2 \,\xrightarrow{a.s.}\,	0,
		\end{aligned}
		\vspace{-1mm}
	\end{equation}
	as {\newtext $n \to \infty$}, for each $i \in \{1,\dots,m-1\}$.

	Previous results showed that for each $i$,
	the function $\norm{S_i(\Theta')}\frob^2$ converges	with probability 1. 
    As a consequence, {\newtext for each realization $\omega \in \Omega$ (from an event with probability one, where $(\Omega, \mathcal{F}, \mathbb{P})$ is the underlying probability space),}
    and for each $\delta >0$, there exists a (realization dependent) $N(\omega)>0$ such that for $n \geq N$ (from now on, $i \neq 0$),
	\vspace{-1mm}
	\begin{align}
		\norm{P_n^{-\frac{1}{2}}V_n - P^{-\frac{1}{2}}V}\frob \leq \delta,
		\hspace{2mm}\norm{\tfrac{1}{n}P_n^{-\frac{1}{2}}\Psi_n\tr W_n}\frob \leq \delta,\\[2mm]
		\norm{\tfrac{1}{n}P_n^{-\frac{1}{2}}\Psi_n\tr\Lambda_i\Phi_n}\frob \leq \delta,
		\hspace{2mm}\norm{\tfrac{1}{n}P_n^{-\frac{1}{2}}\Psi_n\tr\Lambda_iW_n}\frob \leq \delta.
	\end{align}
	{\newtext Then}, for all $n \geq N$, we have
	\vspace{-1mm}
		\begin{align}
			&\norm{S_0(\Theta')}\frob = \norm{P_n^{-\frac{1}{2}}V_n\tilde{\Theta} + \tfrac{1}{n}P_n^{-\frac{1}{2}}\Psi_n\tr W_n}\frob = \notag\\
			& \norm{(P_n^{-\frac{1}{2}}V_n - P^{-\frac{1}{2}}V)\tilde{\Theta} + P^{-\frac{1}{2}}V\tilde{\Theta} +  \tfrac{1}{n}P_n^{-\frac{1}{2}}\Psi_n\tr W_n}\frob = \notag\\
			&\norm{- P^{-\frac{1}{2}}V\tilde{\Theta} -(P_n^{-\frac{1}{2}}V_n - P^{-\frac{1}{2}}V)\tilde{\Theta} - \tfrac{1}{n}P_n^{-\frac{1}{2}}\Psi_n\tr W_n}\frob \geq \notag\\
			&\norm{- P^{-\frac{1}{2}}V\tilde{\Theta}}\frob -\norm{(P_n^{-\frac{1}{2}}V_n - P^{-\frac{1}{2}}V)\tilde{\Theta}}\frob \notag\\ 
			& -  \norm{\tfrac{1}{n}P_n^{-\frac{1}{2}}\Psi_n\tr W_n}\frob \geq
			\norm{P^{-\frac{1}{2}}V\tilde{\Theta}}\frob -\delta\norm{\tilde{\Theta}}\frob-  \delta =\notag\\
			&\norm{U_{\sigma}\Sigma V_{\sigma}\tr\tilde{\Theta}}\frob -\delta\norm{\tilde{\Theta}}\frob-  \delta \geq \notag\\
			&\sigma_{\text{min}}(P^{-\frac{1}{2}}V)\norm{\tilde{\Theta}}\frob -\delta\norm{\tilde{\Theta}}\frob-  \delta,\\[-7mm]
			\nonumber
		\end{align}
	where $U_{\sigma}\Sigma V_{\sigma}$ is the SVD decomposition of $P^{-\frac{1}{2}}V$ and $\sigma_{\text{min}}(\cdot)$ denotes the smallest singular value. We also have
	\vspace{-1mm}
		\begin{align}
			&\norm{S_i(\Theta')}\frob =  \norm{\tfrac{1}{n}P_n^{-\frac{1}{2}}\Psi_n\tr\Lambda_i\Phi_n\tilde{\Theta} + \tfrac{1}{n}P_n^{-\frac{1}{2}}\Psi_n\tr\Lambda_i W_n}\frob \notag\\
			& \leq \norm{\tfrac{1}{n}P_n^{-\frac{1}{2}}\Psi_n\tr\Lambda_i\Phi_n}\frob \norm{\tilde{\Theta}}\frob + \norm{\tfrac{1}{n}P_n^{-\frac{1}{2}}\Psi_n\tr\Lambda_i W_n}\frob \notag\\
			& \leq
			\delta\norm{\tilde{\Theta}}\frob + \delta.
		\end{align}
	Hence, we have $\norm{S_i(\Theta')}\frob < \norm{S_0(\Theta')}\frob, \forall\, \Theta'$ that satisfy
	\vspace{-1mm}
	\begin{equation}
		\delta\norm{\tilde{\Theta}}\frob + \delta < \sigma_{\text{min}}(P^{-\frac{1}{2}}V)\norm{\tilde{\Theta}}\frob -\delta\norm{\tilde{\Theta}}\frob-  \delta,
	\end{equation}
	which can be reformulated as
	\vspace{-1mm}	
	\begin{equation}
		\kappa_0(\delta)\, \defeq\, \frac{2\delta}{\sigma_{\text{min}}(P^{-\frac{1}{2}}V) - 2\delta} < \norm{\tilde{\Theta}}\frob\!,
	\end{equation}
	therefore, those $\Theta'$ for which $\kappa_0(\delta) < \norm{\Theta^* - \Theta'}\frob$ are not in the confidence region ${\Upsilon}_n$, for $n \geq N$. For any 
	$\varepsilon > 0$, by setting $\delta = (\varepsilon\sigma_{\text{min}}(P^{-\frac{1}{2}}V))/(2+2\varepsilon)$, we have $\Upsilon_n \subseteq \mathcal{B}_{\varepsilon}(\Theta^*)$, therefore, the claim of the theorem follows.
\end{proof}

\end{document}